\documentclass[hidelinks,onefignum,onetabnum]{siamart251104}

\usepackage{microtype}
\usepackage{booktabs}
\usepackage{enumitem}
\setlist[itemize]{topsep=0pt, partopsep=0pt}

\usepackage{lipsum}
\usepackage{amsfonts}
\usepackage{graphicx}
\usepackage{epstopdf}
\usepackage{algorithmic}
\usepackage{subcaption}
\usepackage{bm}
\ifpdf
  \DeclareGraphicsExtensions{.eps,.pdf,.png,.jpg}
\else
  \DeclareGraphicsExtensions{.eps}
\fi

\newsiamremark{remark}{Remark}
\newsiamremark{hypothesis}{Hypothesis}
\crefname{hypothesis}{Hypothesis}{Hypotheses}
\newsiamthm{claim}{Claim}
\newsiamremark{fact}{Fact}
\crefname{fact}{Fact}{Facts}

\headers{grouped pooled posterior via ensemble Kalman inversion}{H. Yang, X. Dong, JL. Wu.}

\title{Bayesian experimental design: grouped geometric pooled posterior via ensemble Kalman methods\thanks{Submitted to the editors DATE.
\funding{This work is supported by the University of Wisconsin-Madison, Office of the Vice Chancellor for Research and Graduate Education with funding from the Wisconsin Alumni Research Foundation.}}}

\author{
Huchen Yang\thanks{Department of Mechanical Engineering, University of Wisconsin--Madison, Madison, WI
  (\email{huchen.yang@wisc.edu}, \email{xdong94@wisc.edu}, \email{jinlong.wu@wisc.edu}).}
\and
Xinghao Dong\footnotemark[2]
\and
Jin-Long Wu \footnotemark[2]
}

\usepackage{amsopn}

\ifpdf
\hypersetup{
  pdftitle={Bayesian experimental design: grouped geometric pooled posterior via ensemble Kalman methods},
  pdfauthor={H. Yang, X. Dong, JL. Wu.}
}
\fi

\DeclareMathOperator*{\argmax}{arg\,max}

\begin{document}

\maketitle

\begin{abstract}
Bayesian experimental design (BED) for complex physical systems is often limited by the nested inference required to estimate the expected information gain (EIG) or its gradients. Each outer sample induces a different posterior, creating a large and heterogeneous set of inference targets. Existing methods have to sacrifice either accuracy or efficiency: they either perform per-outer-sample posterior inference, which yields higher fidelity but at prohibitive computational cost, or amortize the inner inference across all outer samples for computational reuse, at the risk of degraded accuracy under posterior heterogeneity. To improve accuracy and maintain cost at the amortized level, we propose a grouped geometric pooled posterior framework that partitions outer samples into groups and constructs a pooled proposal for each group. While such grouping strategy would normally require generating separate proposal samples for different groups, our tailored ensemble Kalman inversion (EKI) formulation generates these samples without extra forward-model evaluation cost. We also introduce a conservative diagnostic to assess importance-sampling quality to guide grouping. This grouping strategy improves within-group proposal–target alignment, yielding more accurate and stable estimators while keeping the cost comparable to amortized approaches. We evaluate the performance of our method on both Gaussian-linear and high-dimensional network-based model discrepancy calibration problems. 

\end{abstract}

\begin{keywords}
Bayesian experimental design, Model discrepancy, grouped pooled posterior, importance sampling, ensemble Kalman method
\end{keywords}

\begin{MSCcodes}
62K05, 62F15, 65C05
\end{MSCcodes}

\section{Introduction}

In Bayesian experimental design (BED,~\cite{chaloner_bayesian_1995, lindley_measure_1956, lindley_bayesian_1972}), the expected information gain (EIG) is commonly defined as an outer expectation of a posterior-to-prior information measure with respect to the joint distribution of parameters and hypothetical observations~\cite{jones_bayes_2016, ryan_review_2016}. Estimating EIG or its gradient for design optimization therefore requires handling a large and heterogeneous family of posterior distributions, which is a central challenge in BED and has motivated extensive prior work~\cite{rainforth_modern_2023, huan_optimal_2024}. A straightforward way to obtain high-fidelity EIG estimates is to perform inference separately for each outer sample of parameters and data. Classical approaches following this principle include nested Monte Carlo estimators~\cite{ryan2003estimating}, particle-based methods such as sequential Monte Carlo estimators~\cite{drovandi2014sequential, grosse2016measuring}, or local Laplace approximations~\cite{long2013fast, beck2018fast, englezou2022approximate}. These methods can already be computationally demanding for a single outer sample: Monte Carlo-based estimators require repeated forward-model evaluations, while Laplace approximations involve solving an optimization problem and computing the Hessian or its approximation. Moreover, this inner inference must be repeated across many outer samples, which makes the overall computation costly. Even when the cost of inner inference for each outer sample can be reduced~\cite{YANG2026114469, callahanreverse}, the total forward-model cost often remains substantial because it still grows with both the number of outer samples and the cost of the inner inference required for each one.

To improve efficiency at a higher level, the inner inference can be amortized across outer samples rather than performed independently for each outer sample~\cite{foster_deep_2021, huang2024amortized}. This is reasonable considering the posteriors induced by different outer samples may be similar, so that some shared inference mechanism can be effective across multiple outer samples. This idea can be implemented in different ways. For example, one may train a shared variational surrogate that produces an approximate posterior for each outer sample~\cite{foster_unified_2020, dong2025variational}, or construct a common proposal distribution and use its weighted samples to approximate the per-outer-sample (individual) posterior~\cite{ryan2003estimating, huan_simulation-based_2013}. By replacing repeated per-outer-sample inner inference with a shared surrogate or proposal, such methods can substantially reduce computational cost. Their effectiveness, however, depends on whether the shared inference mechanism remains sufficiently accurate across the full range of outer samples. For example, if the proposal poorly matches some individual posteriors, the resulting posterior approximations or importance-sampling estimates may deteriorate~\cite{parks2006recycling, feng2019layered}.

This motivates the development of stronger shared inference that remains accurate across a heterogeneous family of individual posteriors. For importance sampling in particular, the key is to construct a shared proposal that better matches the family of individual posteriors. Recent work has introduced the geometric pooled posterior (GPP) for this purpose~\cite{iollo2024bayesian}. Formally, the GPP is defined as the distribution that minimizes an average Kullback-Leibler (KL) divergence to all the individual posteriors, which yields a logarithmic pool, equivalently a geometric mixture of those posteriors. Because it is, on average, closer to the individual posteriors, it is expected to provide better overlap with the target posteriors and therefore improve the effectiveness of importance sampling. Despite its name, however, the GPP is not a posterior distribution of the original BED problem, but a proposal distribution for importance sampling. It is not necessary to explicitly compute all individual posteriors and then combine them. Instead, the same distribution can be expressed directly through the prior and pooled likelihood terms across outer samples. Prior work has generated samples from the GPP using conditional diffusion models and demonstrated the effectiveness of this strategy~\cite{iollo2024bayesian}. However, the GPP still represents a single global proposal for the entire outer ensemble, i.e., a "one-size-fits-all" strategy. In this way, GPP may adapt well to the dominant region of the posterior family, yet remain poorly matched to outlying individual posteriors. When this happens, the overlap may be limited, and the resulting importance-sampling estimates may degrade~\cite{parks2006recycling, feng2019layered}, e.g., weight degeneracy and unstable estimates.

To address this limitation, we propose a grouping strategy over outer samples. Instead of forcing a single global proposal on the entire outer ensemble, we partition the outer samples into subsets whose individual posteriors exhibit a higher degree of similarity, compared to the similarity across the outer ensemble as a whole. We then construct a dedicated geometric pooled posterior for each group, and then use that group-specific proposal to perform importance sampling for the outer samples within the group. Because the samples in each group are organized to have similar posteriors, the resulting group-specific proposal can achieve substantially better proposal-target overlap for the corresponding individual posteriors than a single global proposal. At the same time, allowing different groups to have different proposals enables the approximation to adapt to distinct regions of the posterior family, rather than fitting all individual posteriors with one shared distribution. In this way, the proposed strategy improves the stability and reliability of importance-sampling estimation under posterior heterogeneity. Importantly, it still retains the main advantage of amortization: inference is shared within each group, so one avoids the cost of repeating a fully separate inner inference for every outer sample.

While this grouping strategy improves the importance-sampling quality, it also extends proposal-sample generation from a single global proposal to multiple group-specific proposals. Proposal-sample generation is necessary because importance sampling relies on samples drawn from the proposal to represent each target distribution. As a result, moving from one proposal to several group-specific proposals introduces additional sampling cost. Under standard likelihood-based inference methods such as MCMC, SMC, or recent conditional diffusion samplers~\cite{iollo2024bayesian}, generating samples for even a single proposal can already be expensive, since each group-specific proposal typically requires repeated forward-model evaluations. Extending this process to multiple groups therefore replicates the cost across groups, causing the total computational cost to grow roughly in proportion to the number of groups. This makes traditional sample-generation methods poorly suited to the grouped setting.

To avoid this additional computational burden, we adapt ensemble Kalman inversion (EKI,~\cite{kovachki_ensemble_2019, ding2021ensemble}) to generate samples from grouped geometric pooled posteriors. A key advantage of this approach is that the forward-model evaluation cost does not grow with the number of groups: generating multiple group-specific proposals costs essentially the same as generating a single global proposal. This is because forward-model evaluations are required only in the prediction step of EKI, while the subsequent group-specific updates are carried out entirely in the analysis step. As a result, additional group-specific pooled posterior samples can be obtained with no extra forward-model cost, giving our approach a substantial computational advantage over traditional sampling methods. 

While the use of EKI to generate approximate posterior samples in BED problems has been explored in prior work~\cite{YANG2026114469,yang_active_2025}, our contribution is to adapt it to grouped pooled posteriors and use the resulting ensembles as proposals for importance sampling, rather than as direct approximations for information-gain estimation. Building on this idea, we further develop practical implementations for pooled posteriors and introduce a conservative ESS-based diagnostic that helps assess importance-sampling quality and guide grouping in a self-consistent and adaptive manner.

We apply our method to find the best data for both physical parameter inference and model discrepancy calibration, where a high-dimensional neural network is introduced to correct a governing partial differential equation (PDE)~\cite{levine2022framework, ebers2024discrepancy}. This is a challenging setting for BED because model discrepancy can bias forward evaluations~\cite{rainforth_modern_2023, barlas2025robust, kennedy_bayesian_2001,feng_optimal_2015}, while the resulting discrepancy network often leads to a high-dimensional calibration problem~\cite{dong2025stochastic, wu2024learning, yang_active_2025}. As a result, performing Bayesian updates and estimating information gains in that high-dimensional space becomes prohibitively expensive~\cite{alexanderian2021optimal,wu2023fast,go2025sequential, neuberger2025goal}. We build on the decoupled framework~\cite{YANG2026114469}, which treats the physical and network parameters separately, and replace its network-side gradient estimator AD-EKI with our grouped pooled posterior method. Also comparing with the conditional diffusion method~\cite{iollo2024bayesian}, we show that this modification achieves comparable performance in design search and parameter estimation at a lower computational cost.


The key highlights of our work are as follows.
\begin{itemize}
\item We propose a grouped geometric pooled posterior method for Bayesian experimental design, which partitions outer samples into groups and performs importance sampling within each group. This improves the quality of importance-sampling estimation while preserving the computational advantage of amortization.
\item We adapt ensemble Kalman inversion to make sample generation under grouping setting practical: compared with using a single global pooled proposal, generating samples for multiple groups does not increase forward-model evaluation cost. We also introduce a conservative ESS-based diagnostic to guide the grouping.
\item We validate our method on model-discrepancy calibration across parametric and structural error types, demonstrating strong performance and marked computational savings, especially in neural-network-correction scenarios, relative to benchmarks in high-dimensional settings.
\end{itemize}

This paper is organized as follows. Section \ref{sec: problem} formulates the problem and introduces the pooled-posterior-based solution. Section \ref{sec: eki} presents the tailored ensemble method for drawing pooled posterior samples. Section \ref{sec: group} introduces the grouping strategy and the full grouped framework. Section \ref{sec: model discrepancy} describes the application to model discrepancy calibration. Numerical results are reported in Section \ref{sec: Numerical Results}, and conclusions are drawn in Section \ref{Conclusion}.


\section{Problem formulation and EIG gradient estimation}\label{sec: problem}

\subsection{Bayesian experimental design}

The Bayesian experimental design~\cite{lindley_bayesian_1972, rainforth_modern_2023,huan_optimal_2024} provides a general framework to systematically seek the optimal design by solving the optimization problem: 
\begin{equation}
\label{eq:oed_opt}
    \begin{aligned}
        \mathbf{d}^* &= \argmax_{\mathbf{d}\in\mathcal{D}} \mathbb{E}[U(\boldsymbol{\theta},\mathbf{y},\mathbf{d})]\\
        &= \argmax_{\mathbf{d}\in\mathcal{D}} \int_\mathcal{Y} \int_{\boldsymbol{\Theta}} U(\boldsymbol{\theta},\mathbf{y},\mathbf{d}) p(\boldsymbol{\theta},\mathbf{y}|\mathbf{d}) \mathrm{d}\boldsymbol{\theta}\mathrm{d}\mathbf{y},
    \end{aligned}
\end{equation}
where $\mathbf{y} \in \mathcal{Y} \subset \mathbb{R}^{d_{\mathbf{y}}}$ is data from the experimental design $\mathbf{d} \in \mathcal{D}\subset \mathbb{R}^{d_{\mathbf{d}}}$, $\boldsymbol{\theta} \in \boldsymbol{\Theta}\subset \mathbb{R}^{d_{\boldsymbol{\theta}}}$ denotes the target parameters, and $U$ is the utility function taking the inputs of $\mathbf{y}$ and $\boldsymbol{\theta}$ and returning a real value, which reflects the specific purpose of designing the experiment. The optimal design $\mathbf{d}^*$ is obtained by maximizing the expected utility function $\mathbb{E}[U(\boldsymbol{\theta},\mathbf{y},\mathbf{d})]$ over the design space $\mathcal{D}$. The term $p(\boldsymbol{\theta},\mathbf{y}|\mathbf{d})$ is the joint conditional distribution of data and parameters.

The utility function $U(\boldsymbol{\theta},\mathbf{y},\mathbf{d})$ can be regarded as the information gain obtained from the data $\mathbf{y}$ for the corresponding design $\mathbf{d}$. A most common choice in Bayesian setting is the Kullback-Leibler (KL) divergence between the posterior and the prior distribution of parameters $\boldsymbol{\theta}$:
\begin{equation}
    \begin{aligned}
        U(\boldsymbol{\theta},\mathbf{y},\mathbf{d}) &= D_{\textrm{KL}} \bigl(p(\boldsymbol\theta|\mathbf{y},\mathbf{d})\| p(\boldsymbol\theta)\bigr)\\
        &=\mathbb{E}_{\boldsymbol\theta|\mathbf{d},\mathbf{y}}(\log p(\boldsymbol{\theta}|\mathbf{y},\mathbf{d}) - \log p(\boldsymbol{\theta}))\\
        &=\int_{\boldsymbol{\Theta}} p(\boldsymbol{\theta}|\mathbf{y},\mathbf{d}) \log(\frac{p(\boldsymbol{\theta}|\mathbf{y},\mathbf{d})}{p(\boldsymbol{\theta})})\mathrm{d}\boldsymbol{\theta},
    \end{aligned}
    \label{eq: kld utility}
\end{equation}
where $p(\boldsymbol\theta|\mathbf{y},\mathbf{d})$ is the posterior distribution of $\boldsymbol\theta$ given a design $\mathbf{d}$ and the data $\mathbf{y}$. Note that data $\mathbf{y}$ could be either from the actual experimental measurement or the numerical model simulation. The use of KL divergence as the utility function leads to the definition of expected information gain (EIG,~\cite{lindley_measure_1956}) that integrates the information gain over all possible predicted data:
\begin{equation}
\label{eq:EIG}
    \begin{aligned}
        I (\mathbf{d}) &=\mathbb{E}_{\mathbf{y} | \mathbf{d}} [D_{\textrm{KL}}(p(\boldsymbol\theta|\mathbf{y},\mathbf{d})\| p(\boldsymbol\theta))]\\
  &=\int_{\mathcal{Y}}D_{\textrm{KL}}(p(\boldsymbol\theta|\mathbf{y},\mathbf{d})\| p(\boldsymbol\theta)) p(\mathbf{y}|\mathbf{d})\mathrm{d}\mathbf{y},\\      
    \end{aligned}
\end{equation}
where $I$ represents the EIG, and $p(\mathbf{y}|\mathbf{d}):=\mathbb{E}_{\boldsymbol{\theta}}[p(\mathbf{y}|\boldsymbol{\theta},\mathbf{d})]$ is the distribution of the predicted data among all possible parameter values given a certain design.

\subsection{Importance sampling with geometric pooled posterior}

To solve the design optimization problem, importance sampling is one of the most common methods to estimate the gradient of EIG w.r.t. design~\cite{goda2022unbiased, ao2024estimating}:
\begin{equation}
    \nabla_{\mathbf{d}} I(\mathbf{d}) = \mathbb{E}_{p(\boldsymbol\theta,\mathbf{y}|\mathbf{d})} \left[ \nabla_{\mathbf{d}}\log (\mathbf{d},\mathbf{y},\boldsymbol\theta) - \mathbb{E}_{q(\boldsymbol\theta',\mathbf{y}|\mathbf{d})} \left[ \frac{p(\boldsymbol\theta' | \mathbf{y},\mathbf{d})}{q(\boldsymbol\theta',\mathbf{y}|\mathbf{d})} \nabla_{\mathbf{d}}\log (\mathbf{d},\mathbf{y},\boldsymbol\theta,\boldsymbol\theta')\right] \right]
\end{equation}
where $\boldsymbol\theta'$ also represents the unknown parameter but requires another sampling process, which is independent to $\boldsymbol\theta$. $q$ is the proposal distribution for importance sampling, which needs to be chosen by users. A good choice of proposal distribution can significantly improve the efficiency.

The most common choice for the proposal distribution is the prior distribution. However, this simple choice is very likely to cause weight degeneracy. Previous work~\cite{iollo2024bayesian} proposes using a geometric pooled posterior (GPP) as the proposal distribution in importance sampling. Let the prior be \(p(\boldsymbol\theta)\), samples $\{\boldsymbol\theta_i\}_{i=1}^N$ ($\boldsymbol\theta_i \in\mathbb{R}^{{d}_{\boldsymbol\theta}}$) and let the outer samples be \(Y=\{\mathbf{y}_i\}_{i=1}^{N}\) with \(\mathbf{y}_i \in \mathbb{R}^{d_\mathbf{y}}\). The outer samples \(Y=\{\mathbf{y}_i\}_{i=1}^{N}\) is generated with prior samples through the forward model $g(\boldsymbol\theta)$ and with observation noise $\boldsymbol\epsilon_i \sim \mathcal{N}(\boldsymbol{0},\boldsymbol\Sigma)$ : 
\begin{equation}
    \mathbf{y}_i=f(\boldsymbol\theta_i)+\boldsymbol\epsilon_i
    \label{eq:forward}
\end{equation}

 With nonnegative weights \(\{\nu_i\}_{i=1}^{N}\) such that \(\sum_{i=1}^{N}\nu_i=1\), the geometric pooled posterior is defined as
\begin{equation}
\label{eq:gpp-def}
p_{\mathrm{GPP}}(\boldsymbol\theta|{Y})
\;\propto\;
p(\boldsymbol\theta)\,\prod_{i=1}^{N} p(\mathbf{y}_i|\boldsymbol\theta)^{\nu_i}.
\end{equation}

The sampling-based estimation of the gradient of EIG w.r.t design is written as
\begin{equation}\label{eq: mc eig gradient to design}
    \nabla_{\mathbf{d}} I(\mathbf{d}) \approx \frac{1}{N} \sum_{i=1}^N \left[ \nabla_{\mathbf{d}}\log (\mathbf{d},\mathbf{y}_i,\boldsymbol\theta_i) - \frac{1}{M} \sum_{j=1}^M \left[ \omega_j \nabla_{\mathbf{d}}\log (\mathbf{d},\mathbf{y}_i,\boldsymbol\theta_i,\boldsymbol\theta'_j)\right] \right]
\end{equation}
where $\omega_j$ is the importance sampling weight with GPP being the proposal,
\begin{equation}
    \omega_j =\frac{p(\boldsymbol\theta'_j | \mathbf{y}_i,\mathbf{d})}{p_\text{GPP}(\boldsymbol\theta'_j|Y,\mathbf{d})}.
\end{equation}
For non-linear or complex systems, it should be noted that the exact value of both probability density in this weight is often not available. Then self normalized importance sampling weights:
\begin{equation}
\begin{aligned}
    \omega_j 
    &= \frac{p(\boldsymbol\theta'_j | \mathbf{y}_i,\mathbf{d})}
            {p_\text{GPP}(\boldsymbol\theta'_j| Y,\mathbf{d})}\\[4pt]
    &= 
    \frac{
        p(\boldsymbol\theta'_j)\, p(\mathbf{y}_i|\boldsymbol\theta'_j ,\mathbf{d}) 
        \,/\, p(\mathbf{y}_i|\mathbf{d})
    }{
        p(\boldsymbol\theta'_j)\, \prod_{k=1}^{N} p(\mathbf{y}_k| \boldsymbol\theta'_j,\mathbf{d})^{\nu_k}
        \,/\, p(Y|\mathbf{d})
    }\\[4pt]
    &\propto 
    \frac{
        p(\mathbf{y}_i|\boldsymbol\theta'_j ,\mathbf{d})
    }{
        \prod_{k=1}^{N} p(\mathbf{y}_k| \boldsymbol\theta'_j,\mathbf{d})^{\nu_k}
    }.
\end{aligned}
\label{eq:snis}
\end{equation}

With the Gaussian observation noise defined in Eq.~\eqref{eq:forward}, the likelihood terms in the last line of Eq.~\eqref{eq:snis} can be directly written out in closed-form. Another benefit of this formula is that the prior probability density is canceled, which means this formula works even without knowing explicit form of prior distribution, e.g., only the prior samples are provided.

\section{Sampling with ensemble Kalman inversion}\label{sec: eki}

Generation of samples from the geometric pooled posterior (GPP) can be formulated within an ensemble Kalman framework. Specifically, we employ the ensemble Kalman inversion (EKI) as an approximate Bayesian update to construct GPP proposals. For given observation data, we draw samples from the prior and apply a one-step EKI update to map these prior samples to posterior samples. Under the assumption of linear model and Gaussian noise, a one-step (stochastic) EKI update coincides with the exact Bayesian posterior update and thus provides the conceptual foundation of our method. Similar approaches have already been explored in a variety of studies~\cite{yang_active_2025, YANG2026114469}. However, to tailor EKI to the GPP setting considered here, several modifications are required.

\subsection{Naive stacking version} Define the stacked observation and repeated prediction
\begin{equation}
\label{eq:stack-Y}
Y
=
\begin{bmatrix}
\mathbf{y}_1 \\ \mathbf{y}_2 \\ \vdots \\ \mathbf{y}_N
\end{bmatrix}
\in \mathbb{R}^{Nd_{\mathbf{y}}},
\qquad
F(\boldsymbol{\theta})
=
\begin{bmatrix}
f(\boldsymbol{\theta}) \\ f(\boldsymbol{\theta}) \\ \vdots \\ f(\boldsymbol{\theta})
\end{bmatrix}
\in \mathbb{R}^{Nd_{\mathbf{y}}}.
\end{equation}
where we use $Y$ as the stacked outer samples.

Assume Gaussian observation models $p(\mathbf{y}_i| \boldsymbol{\theta})=\mathcal{N}\big(\mathbf{y}_i; f(\boldsymbol\theta),\boldsymbol\Sigma_i\big)$ with positive weights $\{\nu_i\}_{i=1}^N$ satisfying $\sum_i \nu_i=1$. To incorporate the geometric weights, assign blockwise noise covariances $\boldsymbol\Sigma_i/\nu_i$ for the $i$-th block:
\begin{equation}
\label{eq:sigma-gpp}
\boldsymbol\Sigma_{\mathrm{GPP}}
=
\mathrm{diag}\!\left(\tfrac{\boldsymbol\Sigma_1}{\nu_1},\,\tfrac{\boldsymbol\Sigma_2}{\nu_2},\,\ldots,\,\tfrac{\boldsymbol\Sigma_N}{\nu_N}\right)
\in \mathbb{R}^{N{ d_{\mathbf{y}}} \times N{ d_{\mathbf{y}}}}.
\end{equation}

Let $\{\boldsymbol\theta^{(j)}\}_{j=1}^J$ be an ensemble draw from the prior, with $J$ denoting the ensemble size. Define sample means $\bar{\boldsymbol\theta}=\frac{1}{J}\sum_j \boldsymbol\theta^{(j)}$ and $\bar F=\frac{1}{J}\sum_j F(\boldsymbol\theta^{(j)})$, and covariances
\begin{align}
\label{eq:PthF}
P_{\boldsymbol\theta F}
&=
\frac{1}{J-1}\sum_{j=1}^{J}
\big(\boldsymbol\theta^{(j)}-\bar{\boldsymbol\theta}\big)\big(F(\boldsymbol\theta^{(j)})-\bar F\big)^{\top},\\
\label{eq:PFF}
P_{FF}
&=
\frac{1}{J-1}\sum_{j=1}^{J}
\big(F(\boldsymbol\theta^{(j)})-\bar F\big)\big(F(\boldsymbol\theta^{(j)})-\bar F\big)^{\top}.
\end{align}
The Kalman gain for the augmented system is
\begin{equation}
\label{eq:kalman-gain}
K
=
P_{\boldsymbol\theta F}\,\big(P_{FF}+\Sigma_{\mathrm{GPP}}\big)^{-1},
\end{equation}
and each ensemble member is updated by
\begin{equation}
\label{eq:eki-update}
\boldsymbol\theta^{(j)}_{\mathrm{new}}
=
\boldsymbol\theta^{(j)}
+
K\,\big( Y - F(\boldsymbol\theta^{(j)}) \big),
\qquad j=1,\ldots,J.
\end{equation}

Perturbations for $F(\boldsymbol\theta^{(j)})$ in Eq.~\eqref{eq:eki-update} are required in practice to preserve sampling variability, as called stochastic EKI formula. The updated ensemble $\{\boldsymbol\theta^{(j)}_{\mathrm{new}}\}_{j=1}^J$ approximates the desired GPP distribution and can be directly used for the subsequent importance sampling. In this case the ensemble size $J$ equals to the inner sample size $M$. The proposed EKI-based sample generation method is generally more computationally feasible than other sampling methods, i.e., MCMC, conditional diffusion, etc.

To justify this construction, we first establish an exact equivalence between the likelihood of GPP and a Gaussian likelihood under a stacked-observation model, and then relate the one-step stochastic EKI update to the induced posterior, which becomes exact under linear-Gaussian assumptions as ensemble size approaches infinity.

\begin{proposition}[Stacked-observation representation of geometric pooling]
\label{prop:stacked_gpp_equivalence}\\
Let $Y$ and $F(\boldsymbol{\theta})$ be defined in \eqref{eq:stack-Y} and let $\boldsymbol\Sigma_{\mathrm{GPP}}$ be defined in \eqref{eq:sigma-gpp}.
Under the Gaussian observation models $p(\mathbf y_i\mid\boldsymbol{\theta})=\mathcal N(\mathbf y_i; f(\boldsymbol{\theta}),\boldsymbol\Sigma_i)$ with weights
$\{\nu_i\}_{i=1}^N$ satisfying $\nu_i>0$ and $\sum_i \nu_i=1$, we have
\begin{equation}
\label{eq:stacked_like_equiv}
\begin{aligned}
\prod_{i=1}^N p(\mathbf y_i\mid\boldsymbol{\theta})^{\nu_i}
\;&\propto \; \exp\!\left(
-\tfrac12 \sum_{i=1}^N \nu_i \, \|\mathbf{y}_i - f(\boldsymbol{\theta})\|^2_{\boldsymbol\Sigma_i}
\right) \\
&\propto \; 
\mathcal N\!\big(Y;\, F(\boldsymbol{\theta}),\, \Sigma_{\mathrm{GPP}}\big)
\end{aligned}
\end{equation}
where the proportionality constant is independent of $\boldsymbol{\theta}$, and $||a||_{\Sigma}:=a^\top \Sigma^{-1} a$.
\end{proposition}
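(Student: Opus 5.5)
The plan is to prove both proportionalities by direct computation with Gaussian densities, tracking only which factors depend on $\boldsymbol{\theta}$.

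\textbf{First proportionality.} Write each Gaussian likelihood explicitly:
\[
p(\mathbf y_i\mid\boldsymbol{\theta})=(2\pi)^{-d_{\mathbf y}/2}\det(\boldsymbol\Sigma_i)^{-1/2}\exp\!\big(-\tfrac12\|\mathbf y_i-f(\boldsymbol{\theta})\|^2_{\boldsymbol\Sigma_i}\big),
\]
using the paper's notation $\|a\|_{\Sigma}=a^\top\Sigma^{-1}a$. Raising to the power $\nu_i$ and taking the product over $i$ gives
\[
\prod_{i=1}^N p(\mathbf y_i\mid\boldsymbol{\theta})^{\nu_i}
=\Big[\prod_{i=1}^N (2\pi)^{-\nu_i d_{\mathbf y}/2}\det(\boldsymbol\Sigma_i)^{-\nu_i/2}\Big]\exp\!\Big(-\tfrac12\sum_{i=1}^N\nu_i\|\mathbf y_i-f(\boldsymbol{\theta})\|^2_{\boldsymbol\Sigma_i}\Big).
\]
The bracketed prefactor depends only on $\{\boldsymbol\Sigma_i,\nu_i\}$ and not on $\boldsymbol{\theta}$. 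This establishes the first line of \eqref{eq:stacked_like_equiv}.

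\textbf{Second proportionality.} Evaluate the stacked density $\mathcal N(Y;F(\boldsymbol{\theta}),\Sigma_{\mathrm{GPP}})$.
\begin{itemize}
\item Since $\Sigma_{\mathrm{GPP}}$ in \eqref{eq:sigma-gpp} is block diagonal with blocks $\boldsymbol\Sigma_i/\nu_i$, and each $\nu_i>0$ and each $\boldsymbol\Sigma_i$ is positive definite, it is itself positive definite. Its inverse is block diagonal with blocks $\nu_i\boldsymbol\Sigma_i^{-1}$.
\item By \eqref{eq:stack-Y}, the residual $Y-F(\boldsymbol{\theta})$ has $i$-th block $\mathbf y_i-f(\boldsymbol{\theta})$.
\item The quadratic form therefore decouples:
\[
(Y-F(\boldsymbol{\theta}))^\top\Sigma_{\mathrm{GPP}}^{-1}(Y-F(\boldsymbol{\theta}))=\sum_{i=1}^N\nu_i\,\|\mathbf y_i-f(\boldsymbol{\theta})\|^2_{\boldsymbol\Sigma_i}.
\]
\item The normalizing constant $(2\pi)^{-Nd_{\mathbf y}/2}\det(\Sigma_{\mathrm{GPP}})^{-1/2}$, with $\det\Sigma_{\mathrm{GPP}}=\prod_i\nu_i^{-d_{\mathbf y}}\det\boldsymbol\Sigma_i$, is again independent of $\boldsymbol{\theta}$.
\end{itemize}
So the stacked Gaussian equals a $\boldsymbol{\theta}$-independent constant times the same exponential as in the first line. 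Chaining the two steps gives the full statement.

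\textbf{Remarks.} There is no real obstacle; the argument is bookkeeping. The only points needing care are:
\begin{itemize}
\item the block-diagonal inverse and the positive definiteness of $\Sigma_{\mathrm{GPP}}$, which require $\nu_i>0$;
\item stating clearly that ``$\propto$'' means proportionality as functions of $\boldsymbol{\theta}$, with $Y$ held fixed.
\end{itemize}
The normalization $\sum_i\nu_i=1$ is not needed for this identity; it only matters for interpreting the pool as a weighted average.
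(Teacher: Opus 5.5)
Your proposal is correct and follows essentially the same route as the paper. You both use the block-diagonal inverse $\boldsymbol\Sigma_{\mathrm{GPP}}^{-1}=\mathrm{diag}(\nu_1\boldsymbol\Sigma_1^{-1},\ldots,\nu_N\boldsymbol\Sigma_N^{-1})$ to decouple the stacked quadratic form into $\sum_i \nu_i\|\mathbf y_i-f(\boldsymbol\theta)\|^2_{\boldsymbol\Sigma_i}$, with a $\boldsymbol\theta$-independent normalizing constant. Beyond the paper, you make the first proportionality and the determinant explicit where the paper leaves them implicit, and your remark that $\sum_i\nu_i=1$ is not needed for this identity is accurate.
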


\begin{proof}

We show that the likelihood in the desired geometric pooled posterior is proportional to a stacked-observation Gaussian density, i.e., the second proportionality in Eq.~\eqref{eq:stacked_like_equiv}.

Let $Y$ and $F(\boldsymbol{\theta})$ be defined in \eqref{eq:stack-Y}, and let
$\boldsymbol\Sigma_{\mathrm{GPP}}=\mathrm{diag}(\boldsymbol\Sigma_1/\nu_1,\ldots,\boldsymbol\Sigma_N/\nu_N)$ as in \eqref{eq:sigma-gpp}.
The multivariate Gaussian density is written as
\[
\mathcal N\!\big(Y;F(\boldsymbol{\theta}),\boldsymbol\Sigma_{\mathrm{GPP}}\big)
=
C_1\,
\exp\!\Big(-\tfrac12\|Y-F(\boldsymbol{\theta})\|_{\boldsymbol\Sigma_{\mathrm{GPP}}}^2\Big),
\]
where $C_1=(2\pi)^{-Nd_\mathbf{y}/2}|\boldsymbol\Sigma_{\mathrm{GPP}}|^{-1/2}$ is independent of $\boldsymbol{\theta}$.

Since $\boldsymbol\Sigma_{\mathrm{GPP}}^{-1}=\mathrm{diag}(\nu_1\boldsymbol\Sigma_1^{-1},\ldots,\nu_N\boldsymbol\Sigma_N^{-1})$ and
$Y-F(\boldsymbol{\theta})=[\mathbf y_1-f(\boldsymbol{\theta});\ldots;\mathbf y_N-f(\boldsymbol{\theta})]\in \mathbb{R}^{Nd_{\mathbf{y}}}$, we have
\[
\begin{aligned}
    \|Y-F(\boldsymbol{\theta})\|_{\boldsymbol\Sigma_{\mathrm{GPP}}}^2
&= (Y-F(\boldsymbol{\theta}))^\top {\boldsymbol\Sigma_{\mathrm{GPP}}}^{-1} (Y-F(\boldsymbol{\theta}))\\
&= [\mathbf y_1-f(\boldsymbol{\theta});\ldots;\mathbf y_N-f(\boldsymbol{\theta})]^\top \left[\begin{matrix}
    \nu_1 \Sigma_1^{-1} (\mathbf y_1-f(\boldsymbol{\theta})) \\
    \cdots\\
    \nu_N \Sigma_N^{-1} (\mathbf y_N-f(\boldsymbol{\theta})) \\
\end{matrix}\right]\\
&=\sum_{i=1}^N (\mathbf y_i-f(\boldsymbol{\theta}))^\top(\nu_i\boldsymbol\Sigma_i^{-1})(\mathbf y_i-f(\boldsymbol{\theta}))\\
&=
\sum_{i=1}^N \nu_i\|\mathbf y_i-f(\boldsymbol{\theta})\|_{\boldsymbol\Sigma_i}^2.
\end{aligned}
\]
Therefore,
\[
\mathcal N\!\big(Y;F(\boldsymbol{\theta}),\boldsymbol\Sigma_{\mathrm{GPP}}\big)
\;\propto\;
\exp\!\Big(-\tfrac12\sum_{i=1}^N \nu_i\|\mathbf y_i-f(\boldsymbol{\theta})\|_{\boldsymbol\Sigma_i}^2\Big).
\]

\end{proof}

\begin{lemma}[Asymptotic exactness of one-step stochastic EKI]
\label{lem:eki_linear_gaussian}
Consider the linear--Gaussian model $Y = A\boldsymbol{\theta} + \varepsilon$ with prior $\boldsymbol{\theta}\sim\mathcal N(m_0,C_0)$ and noise
$\varepsilon\sim\mathcal N(0,\boldsymbol\Sigma)$.
In the limit $J\to\infty$, one step of the stochastic EKI update produces an updated ensemble whose empirical distribution converges to the exact Bayesian posterior $p(\boldsymbol{\theta}\mid Y)$.

In our stacked construction, the associated posterior is written as
\[
\{\boldsymbol{\theta}^{(j)}_\mathrm{new}\}_{j=1}^J \sim p(\boldsymbol{\theta}\mid Y)\ \propto\ p(\boldsymbol{\theta})\,p(Y\mid \boldsymbol{\theta},\boldsymbol\Sigma_{\mathrm{GPP}}),
\]
with
\[
p(Y\mid \boldsymbol{\theta},\boldsymbol\Sigma_{\mathrm{GPP}})=\mathcal N\!\big(Y;F(\boldsymbol{\theta}),\boldsymbol\Sigma_{\mathrm{GPP}}\big).
\]

\end{lemma}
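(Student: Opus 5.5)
The plan is to reduce everything to the classical Gaussian conditioning identity and then apply the law of large numbers to the ensemble statistics.

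\textbf{Step 1: the target posterior.} Under $Y=A\boldsymbol\theta+\varepsilon$ with prior $\mathcal N(m_0,C_0)$, the posterior is Gaussian,
\[
\mathcal N\bigl(m_0+K_\ast(Y-Am_0),\,(I-K_\ast A)C_0\bigr),
\qquad
K_\ast=C_0A^\top(AC_0A^\top+\boldsymbol\Sigma)^{-1}.
\]

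\textbf{Step 2: the idealized ($J=\infty$) update.} Write the stochastic update as
\[
\boldsymbol\theta_{\mathrm{new}}=\boldsymbol\theta+K\bigl(Y-A\boldsymbol\theta-\eta\bigr),
\qquad \eta\sim\mathcal N(0,\boldsymbol\Sigma)\ \text{independent of }\boldsymbol\theta .
\]
This is the perturbation of $F(\boldsymbol\theta^{(j)})$ mentioned in the text. Replace the empirical covariances by their population values, $P_{\boldsymbol\theta F}\to C_0A^\top$ and $P_{FF}\to AC_0A^\top$, so that $K=K_\ast$. Then $\boldsymbol\theta_{\mathrm{new}}=(I-K_\ast A)\boldsymbol\theta+K_\ast Y-K_\ast\eta$ is an affine image of independent Gaussians, hence Gaussian. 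Its mean is $m_0+K_\ast(Y-Am_0)$. Its covariance is
\[
(I-K_\ast A)C_0(I-K_\ast A)^\top+K_\ast\boldsymbol\Sigma K_\ast^\top .
\]
Expanding this and using $K_\ast(AC_0A^\top+\boldsymbol\Sigma)=C_0A^\top$ collapses it to $(I-K_\ast A)C_0$. This matches Step 1 exactly. It is the standard perturbed-observation (Joseph-form) computation and is routine.

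\textbf{Step 3: finite $J$ and the limit.} This is the main obstacle, because the empirical gain $K_J$ couples all ensemble members, so the updated particles are not i.i.d. The approach is as follows.
\begin{enumerate}
\item By the strong law of large numbers, $P_{\boldsymbol\theta F}\to C_0A^\top$ and $P_{FF}\to AC_0A^\top$ almost surely.
\item Since $\boldsymbol\Sigma\succ0$, matrix inversion is continuous, so $K_J\to K_\ast$ almost surely.
\item Compare the actual ensemble with the fictitious one $\tilde{\boldsymbol\theta}^{(j)}=\boldsymbol\theta^{(j)}+K_\ast(Y-A\boldsymbol\theta^{(j)}-\eta^{(j)})$. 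The fictitious particles are i.i.d. draws from the posterior, so their empirical measure converges weakly almost surely (Varadarajan).
\item The difference satisfies $\|\boldsymbol\theta^{(j)}_{\mathrm{new}}-\tilde{\boldsymbol\theta}^{(j)}\|\le\|K_J-K_\ast\|\,\|Y-A\boldsymbol\theta^{(j)}-\eta^{(j)}\|$.
\end{enumerate}
Averaging the bound in item 4 over $j$ and applying the LLN to the second factor shows that the mean displacement tends to $0$. Hence the bounded-Lipschitz (or Wasserstein-1) distance between the two empirical measures vanishes, and the actual ensemble's empirical distribution converges weakly to the posterior.

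\textbf{Step 4: the stacked construction.} Apply the above with observation $Y$, $A$ replaced by the stacked map $F$ (so $F(\boldsymbol\theta)=[A_0\boldsymbol\theta;\dots;A_0\boldsymbol\theta]$ for linear $f=A_0$), and noise covariance $\boldsymbol\Sigma_{\mathrm{GPP}}$. Proposition~\ref{prop:stacked_gpp_equivalence} identifies $\mathcal N(Y;F(\boldsymbol\theta),\boldsymbol\Sigma_{\mathrm{GPP}})$ with $\prod_i p(\mathbf y_i\mid\boldsymbol\theta)^{\nu_i}$ up to a constant independent of $\boldsymbol\theta$. Therefore the limit is $p_{\mathrm{GPP}}$ in \eqref{eq:gpp-def}. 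The perturbations $\eta$ must be drawn from $\mathcal N(0,\boldsymbol\Sigma_{\mathrm{GPP}})$ for the covariance identity in Step 2 to hold.
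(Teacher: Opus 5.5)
Your proof is correct. It goes beyond the paper, which states this lemma without proof and simply appeals to the standard fact (citing prior EKI work) that a one-step stochastic EKI update ``coincides with the exact Bayesian posterior update'' in the linear--Gaussian case.

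Your Step~2 is exactly the computation that assertion rests on: the perturbed-observation identity $(I-K_\ast A)C_0(I-K_\ast A)^\top+K_\ast\boldsymbol\Sigma K_\ast^\top=(I-K_\ast A)C_0$ at the exact gain. Your Step~3 supplies the $J\to\infty$ argument the paper never gives. It has three ingredients: a.s.\ convergence $K_J\to K_\ast$ (using $\boldsymbol\Sigma\succ0$); coupling with the i.i.d.\ mean-field particles $\tilde{\boldsymbol\theta}^{(j)}$; and the bound $\frac1J\sum_j\|\boldsymbol\theta^{(j)}_{\mathrm{new}}-\tilde{\boldsymbol\theta}^{(j)}\|\le\|K_J-K_\ast\|\,\frac1J\sum_j\|Y-A\boldsymbol\theta^{(j)}-\eta^{(j)}\|\to0$. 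That bound controls the bounded-Lipschitz (indeed $W_1$) distance between the two empirical measures. This is the standard mean-field argument for ensemble Kalman methods, and it is sound.

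Beyond rigor, your route makes three things explicit that the paper leaves implicit.
\begin{itemize}
\item For finite $J$ the updated particles are neither independent nor exactly posterior-distributed, because $K_J$ couples them. So the displayed ``$\sim$'' in the statement is only meaningful as a.s.\ weak convergence of the empirical measure.
\item Steps 1--3 use $Y$ only as fixed data. This is what legitimizes Step~4, even though the stacked $Y$ was not generated by the fictitious model $Y=F(\boldsymbol\theta)+\varepsilon$ with $\varepsilon\sim\mathcal N(0,\boldsymbol\Sigma_{\mathrm{GPP}})$.
\item The perturbations in the stacked update must be drawn from $\mathcal N(0,\boldsymbol\Sigma_{\mathrm{GPP}})$, i.e.\ with the inflated blocks $\boldsymbol\Sigma_i/\nu_i$. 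With the per-sample covariances $\boldsymbol\Sigma_i$, the Step~2 identity generically fails: the ensemble comes out under-dispersed by $K_\ast(\boldsymbol\Sigma_{\mathrm{GPP}}-\mathrm{diag}(\boldsymbol\Sigma_i))K_\ast^\top$, and the limit is not $p_{\mathrm{GPP}}$.
\end{itemize}
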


\begin{proposition}[Stacked-observation EKI targets the GPP posterior in the linear--Gaussian limit]
\label{prop:eki_targets_gpp}
Under the linear--Gaussian assumption and as $J\to\infty$, one step of the stochastic stacked-observation EKI update targets the Bayesian posterior associated with the prior $p(\boldsymbol{\theta})$ and the stacked likelihood $\mathcal N(Y;F(\boldsymbol{\theta}),\boldsymbol\Sigma_{\mathrm{GPP}})$ (Lemma~\ref{lem:eki_linear_gaussian}). 
By Proposition~\ref{prop:stacked_gpp_equivalence}, this posterior coincides with the desired geometric pooled posterior induced by the pooled likelihood $\prod_{i=1}^N p(\mathbf y_i\mid\boldsymbol{\theta})^{\nu_i}$:
\[
\{\boldsymbol{\theta}^{(j)}_\mathrm{new}\}_{j=1}^J \sim p(\boldsymbol{\theta})\,\mathcal N\!\big(Y;F(\boldsymbol{\theta}),\boldsymbol\Sigma_{\mathrm{GPP}}\big)
\;\propto\;
p(\boldsymbol{\theta})\prod_{i=1}^N p(\mathbf y_i\mid\boldsymbol{\theta})^{\nu_i}\;\propto\;p_\mathrm{GPP}.
\]
\end{proposition}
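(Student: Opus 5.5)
The plan is to chain Lemma~\ref{lem:eki_linear_gaussian} and Proposition~\ref{prop:stacked_gpp_equivalence}, after checking that the stacked system fits the hypotheses of the lemma.

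First, I will cast the stacked construction in the form required by Lemma~\ref{lem:eki_linear_gaussian}. Under the linear--Gaussian assumption, $f(\boldsymbol\theta)=H\boldsymbol\theta$ for some matrix $H$. The stacked map is then $F(\boldsymbol\theta)=A\boldsymbol\theta$ with $A=\mathbf{1}_N\otimes H\in\mathbb R^{Nd_{\mathbf y}\times d_{\boldsymbol\theta}}$. The stacked observation model is $Y=A\boldsymbol\theta+\varepsilon$ with $\varepsilon\sim\mathcal N(0,\boldsymbol\Sigma_{\mathrm{GPP}})$, and the prior is $\mathcal N(m_0,C_0)$.

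Next, I will note that the EKI quantities in \eqref{eq:PthF}--\eqref{eq:eki-update}, with perturbations drawn from $\mathcal N(0,\boldsymbol\Sigma_{\mathrm{GPP}})$, are exactly the stochastic EKI update for this model. Here $\boldsymbol\Sigma_{\mathrm{GPP}}$ is symmetric positive definite because every $\nu_i>0$ and every $\boldsymbol\Sigma_i\succ0$. The lemma then gives, as $J\to\infty$, convergence of the updated ensemble to the posterior proportional to $p(\boldsymbol\theta)\,\mathcal N(Y;F(\boldsymbol\theta),\boldsymbol\Sigma_{\mathrm{GPP}})$.

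For a self-contained check, I will verify the moments directly. As $J\to\infty$, $P_{\boldsymbol\theta F}\to C_0A^\top$ and $P_{FF}\to AC_0A^\top$, so $K\to C_0A^\top(AC_0A^\top+\boldsymbol\Sigma_{\mathrm{GPP}})^{-1}$. The updated members $\boldsymbol\theta+K(Y-A\boldsymbol\theta-\eta)$ are then affine maps of Gaussians. Their mean is $m_0+K(Y-Am_0)$ and their covariance is $(I-KA)C_0(I-KA)^\top+K\boldsymbol\Sigma_{\mathrm{GPP}}K^\top=(I-KA)C_0$, which is the Kalman posterior.

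Then I will apply Proposition~\ref{prop:stacked_gpp_equivalence}. It states that $\mathcal N(Y;F(\boldsymbol\theta),\boldsymbol\Sigma_{\mathrm{GPP}})\propto\prod_i p(\mathbf y_i\mid\boldsymbol\theta)^{\nu_i}$ with a $\boldsymbol\theta$-independent constant. Multiplying both sides by $p(\boldsymbol\theta)$ and normalizing gives an identical probability density, namely $p_{\mathrm{GPP}}$ from \eqref{eq:gpp-def}.

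The main obstacle is making the first step rigorous, i.e., the meaning of the limit $J\to\infty$. The Kalman gain is built from empirical covariances, so the members are not exactly independent. I would handle this by citing the standard mean-field limit, in which sample covariances converge almost surely and $K$ converges by continuity of matrix inversion (the inverse exists since $\boldsymbol\Sigma_{\mathrm{GPP}}\succ0$). Each member then converges to its mean-field counterpart, whose law is the posterior, and this yields weak convergence of the empirical measure. Since the lemma is assumed, the remaining work reduces to checking its hypotheses, namely linearity of $F$ and positive definiteness of $\boldsymbol\Sigma_{\mathrm{GPP}}$.
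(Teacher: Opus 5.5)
Your proposal is correct and follows the paper's own route: the paper gives no separate proof and simply chains Lemma~\ref{lem:eki_linear_gaussian} (applied to the stacked model) with Proposition~\ref{prop:stacked_gpp_equivalence}, exactly as you do. Your extra steps are sound and add rigor the paper leaves implicit: writing $F(\boldsymbol\theta)=(\mathbf 1_N\otimes H)\boldsymbol\theta$, noting $\boldsymbol\Sigma_{\mathrm{GPP}}\succ 0$ so the gain is well defined, and checking via the Joseph form that the limiting update has the Kalman posterior mean and covariance $(I-KA)C_0$.
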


In the naive stacked construction, computing the Kalman gain could be expensive because it involves covariance matrix inversion in a high-dimensional stacked observation space. To avoid stacking, we derive an equivalent mean-observation formulation.

\subsection{Mean-observation formulation} Define the effective precision and mean outer samples
\begin{equation}
\label{eq:mean-params}
\boldsymbol\Sigma_\nu^{-1} \;=\; \sum_{i=1}^N \nu_i \,\boldsymbol\Sigma_i^{-1},
\qquad
\bar{\mathbf{y}}_\nu \;=\; \boldsymbol\Sigma_\nu \,\Big(\sum_{i=1}^N \nu_i \,\boldsymbol\Sigma_i^{-1} \mathbf{y}_i\Big).
\end{equation}

An EKI implementation without stacking all outer samples $\mathbf{y}_i$ and without increasing the dimension of the noise covariance matrix is given by:
\begin{equation}
\boldsymbol\theta^{(j)}_{\mathrm{new}}
=
\boldsymbol\theta^{(j)} + K_\nu\big(\bar{\mathbf{y}}_\nu - f(\boldsymbol\theta^{(j)})\big),
\qquad
K_\nu = P_{\boldsymbol\theta f}\big(P_{ff}+\boldsymbol\Sigma_\nu\big)^{-1}.
\end{equation}
This updated ensemble also approximates the desired GPP distribution.

\begin{proposition}[Mean-observation representation of geometric pooling]
\label{prop:mean_gpp_equivalence}
Under the Gaussian observation models $p(\mathbf y_i\mid\theta)=\mathcal N(\mathbf y_i; f(\theta),\boldsymbol\Sigma_i)$
with weights $\nu_i>0$ and $\sum_i \nu_i=1$, we have
\begin{equation}
\label{eq:mean-like}
\prod_{i=1}^N p(\mathbf{y}_i|\boldsymbol\theta)^{\nu_i}
\;\propto\;
\exp\!\left(
-\tfrac12 \, \|\, \bar{\mathbf{y}}_\nu - f(\boldsymbol\theta)\,\|^2_{\boldsymbol\Sigma_\nu}
\right)
\;\propto\;
\mathcal{N}\!\big(\bar{\mathbf{y}}_\nu;\, f(\boldsymbol\theta),\, \boldsymbol\Sigma_\nu\big),
\end{equation}
where the proportionality constant is independent of $\theta$. Therefore, the GPP posterior can be written as
\[
p_{\mathrm{GPP}}(\boldsymbol\theta)\;\propto\;
p(\boldsymbol\theta)\,\mathcal{N}\!\big(\bar{\mathbf{y}}_\nu;\, f(\boldsymbol\theta),\, \boldsymbol\Sigma_\nu\big).
\]
That is, the GPP is exactly equivalent to a single weighted mean observation $\bar{\mathbf{y}}_\nu$ with effective noise covariance $\boldsymbol\Sigma_\nu$.
\end{proposition}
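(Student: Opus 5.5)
The plan is to start from the first proportionality in Proposition~\ref{prop:stacked_gpp_equivalence}, namely
\[
\prod_{i=1}^N p(\mathbf y_i\mid\boldsymbol\theta)^{\nu_i}\propto \exp\Big(-\tfrac12\sum_{i=1}^N \nu_i\|\mathbf y_i-f(\boldsymbol\theta)\|^2_{\boldsymbol\Sigma_i}\Big),
\]
which we take as given. In that identity the normalizing factors $|2\pi\boldsymbol\Sigma_i|^{-\nu_i/2}$ are already absorbed into a $\boldsymbol\theta$-independent constant. The whole proof then reduces to completing the square in the variable $\mathbf u:=f(\boldsymbol\theta)$ in the quadratic form $Q(\mathbf u)=\sum_i \nu_i(\mathbf y_i-\mathbf u)^\top\boldsymbol\Sigma_i^{-1}(\mathbf y_i-\mathbf u)$. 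We treat $f(\boldsymbol\theta)$ as an arbitrary vector, so no linearity of $f$ is needed.

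Expand $Q$ into three groups of terms:
\begin{itemize}
\item the quadratic term $\mathbf u^\top(\sum_i\nu_i\boldsymbol\Sigma_i^{-1})\mathbf u=\mathbf u^\top\boldsymbol\Sigma_\nu^{-1}\mathbf u$;
\item the linear term $-2\mathbf u^\top\sum_i\nu_i\boldsymbol\Sigma_i^{-1}\mathbf y_i=-2\mathbf u^\top\boldsymbol\Sigma_\nu^{-1}\bar{\mathbf y}_\nu$, using the definition \eqref{eq:mean-params};
\item the constant $\sum_i\nu_i\mathbf y_i^\top\boldsymbol\Sigma_i^{-1}\mathbf y_i$.
\end{itemize}
Comparing with $(\bar{\mathbf y}_\nu-\mathbf u)^\top\boldsymbol\Sigma_\nu^{-1}(\bar{\mathbf y}_\nu-\mathbf u)$ gives
\[
Q(\mathbf u)=\|\bar{\mathbf y}_\nu-\mathbf u\|^2_{\boldsymbol\Sigma_\nu}+R,
\qquad
R=\sum_i\nu_i\mathbf y_i^\top\boldsymbol\Sigma_i^{-1}\mathbf y_i-\bar{\mathbf y}_\nu^\top\boldsymbol\Sigma_\nu^{-1}\bar{\mathbf y}_\nu .
\]
The remainder $R$ depends only on the data $\{\mathbf y_i\}$ and the weights, not on $\boldsymbol\theta$. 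So $\exp(-\tfrac12 R)$ is absorbed into the proportionality constant, which yields the first relation in \eqref{eq:mean-like}.

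The one point requiring care is that $\boldsymbol\Sigma_\nu$ is well defined and symmetric positive definite. This holds because each $\boldsymbol\Sigma_i^{-1}$ is SPD and $\nu_i>0$, so the sum of SPD matrices is SPD and hence invertible. That invertibility makes both $\bar{\mathbf y}_\nu$ and the completing-the-square step legitimate. As a side remark, by Cauchy--Schwarz in the $\boldsymbol\Sigma_\nu^{-1}$-geometry one has $R\ge 0$, but this is not needed.

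The second proportionality follows by writing
\[
\mathcal N(\bar{\mathbf y}_\nu;f(\boldsymbol\theta),\boldsymbol\Sigma_\nu)=(2\pi)^{-d_{\mathbf y}/2}|\boldsymbol\Sigma_\nu|^{-1/2}\exp\big(-\tfrac12\|\bar{\mathbf y}_\nu-f(\boldsymbol\theta)\|^2_{\boldsymbol\Sigma_\nu}\big),
\]
whose prefactor is again independent of $\boldsymbol\theta$. Multiplying by $p(\boldsymbol\theta)$ and invoking the definition \eqref{eq:gpp-def} gives the stated form of $p_{\mathrm{GPP}}$. No step is a genuine obstacle; the main thing to watch is the bookkeeping showing that every discarded factor is $\boldsymbol\theta$-independent.
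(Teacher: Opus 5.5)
Your proposal is correct and follows essentially the same route as the paper's appendix proof: expand the weighted quadratic form in $f(\boldsymbol\theta)$, identify the quadratic and linear coefficients with $\boldsymbol\Sigma_\nu^{-1}$ and $\boldsymbol\Sigma_\nu^{-1}\bar{\mathbf y}_\nu$, and absorb the $\boldsymbol\theta$-independent remainder into the normalizing constant. Your explicit check that $\boldsymbol\Sigma_\nu$ is symmetric positive definite, and hence invertible, is a small but worthwhile addition that the paper leaves implicit.
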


The proof of Proposition~\ref{prop:mean_gpp_equivalence} is given in Appendix~\ref{apd: proof of proportion}.

\subsection{Homoskedastic case in BED}

It should be noted that in the context of BED, all outer samples  $\mathbf{y}_i$ come from the same observation model, which means \(\boldsymbol\Sigma_i \equiv \boldsymbol\Sigma\) holds for all \(i=1,2,\cdots,N\). Hence \(\boldsymbol\Sigma_\nu=\boldsymbol\Sigma\), and
\begin{equation}
p_{\mathrm{GPP}}(\boldsymbol\theta)\;\propto\;
p(\boldsymbol\theta)\,\mathcal{N}\!\Big(\bar{\mathbf{y}}_\nu;\, f(\boldsymbol\theta),\, \boldsymbol\Sigma\Big),
\end{equation}
so for BED problems, one may \emph{'average' the outer samples and use the original noise covariance $\boldsymbol\Sigma$ in a standard EKI without further modification}:

\begin{equation}\label{eq:eki final}
\boldsymbol\theta^{(j)}_{\mathrm{new}}
=
\boldsymbol\theta^{(j)} + K_\nu\big(\bar{\mathbf{y}}_\nu - f(\boldsymbol\theta^{(j)})\big),
\qquad
K_\nu = P_{\boldsymbol\theta f}\big(P_{ff}+\boldsymbol\Sigma\big)^{-1}.
\end{equation}

This is algebraically equivalent to the augmented-stacking construction with block covariances, but is often simpler and more computational friendly. We mark the new ensemble as $\{\theta^{(j)}|Y\}_{j=1}^J$, indicating that it is obtained by an EKI update driven by the pooled likelihood defined on $Y$.

\section{Grouped geometric pooled posterior}\label{sec: group}

Although using GPP as a proposal is better than the prior, it is still possible that, for outer samples in the tail of the distribution, the gradients $\partial I/\partial \mathbf{d}$ become extreme values and degrade the overall estimation. To address this issue, we leverage the strengths of the EKI sampling process and propose to group the outer samples. For each group of outer samples, we construct a group-specific pooled posterior and use it as the proposal distribution in importance sampling restricted to that group. Since each grouped pooled posterior is tailored to a subset of outer samples, this strategy reduces the importance sampling variance relative to using a single global proposal.

The proposed framework has two key advantages. First, generating the GPP samples reuses the EKI updating process and therefore does not require any additional forward PDE simulations beyond those already carried out during global EKI. Second, the intermediate ensemble statistics produced by EKI yield a computable conservative diagnostic on the effective sample size (ESS), which we use to guide the construction and refinement of the groups. On the other hand, employing multiple grouped proposals inevitably increases the number of forward PDE simulations, because for each group we must propagate the associated GPP samples through the forward model to obtain their predictions. Nevertheless, the resulting variance reduction is substantially more effective than simply increasing the number of inner samples under a single proposal. Overall, this allows us to achieve a given accuracy level with fewer inner samples per group while keeping the overall computational budget under control.

\textbf{Conservative diagnostic on effective sample size}: To quantify the sampling efficiency with GPP proposal, we derive an explicit approximation for the ESS under a Gaussian observation model. Assuming the importance weights follow a log-normal distribution, the ESS for a specific outer sample $\mathbf{y}_i$ is given by:
\begin{equation}
  \label{eq:ess-bound}
  \mathrm{ESS}(\mathbf{y}_i)
  \approx
  J \exp\!\left(
    -\,(\mathbf{y}_i -\bar{\mathbf{y}}_\nu)^\top R^{-1}\,\Sigma_{ff}\,R^{-1}(\mathbf{y}_i -\bar{\mathbf{y}}_\nu)
  \right),
\end{equation}
where $J$ is the number of inner samples, $\bar{\mathbf{y}}_\nu$ is the mean observation in the group, $R$ is the observation-noise covariance (that is, $\Sigma$), and $\Sigma_{ff}$ denotes the predictive covariance of the forward-model outputs under the proposal distribution. This formula captures the distributional mismatch between the target and the proposal by measuring the Mahalanobis distance between their respective conditioning observations, $\mathbf{y}_i$ and $\bar{\mathbf{y}}_\nu$. The derivation of this formula can be found in Appendix~\ref{apd: ess}. Once the outer samples and the observation operator are fixed, both $\mathbf{y}_i$ and $\bar{\mathbf{y}}_\nu$ are directly available, so evaluating Eq. \eqref{eq:ess-bound} only requires an estimate of $\Sigma_{ff}$.

Computing $\Sigma_{ff}$ directly would require propagating the GPP samples through the forward PDE model to obtain their predictions and then forming the corresponding sample covariance, which incurs an additional set of PDE solves. Notice that the EKI prediction step produces an ensemble of forward-model outputs from the forecast (prior) ensemble, and their sample covariance yields a readily available quantity $P_{ff}$. Since $P_{ff}$ is a predictive covariance in the same observation space as $\Sigma_{ff}$, we substitute $P_{ff}$ for $\Sigma_{ff}$ in Eq. \eqref{eq:ess-bound} to obtain a computable ESS estimate. 

This substitution is conservative: intuitively, the GPP incorporates information from the outer samples and is therefore more concentrated than the prior, which in turn reduces the predictive uncertainty in the observation space. Consequently, the predictive covariance under the GPP ensemble is no larger than that under the forecast ensemble, i.e.,
\begin{equation}
    \Sigma_{ff} \;\preceq\; P_{ff},
\end{equation}
where the $\preceq$ denotes the Loewner (positive semidefinite) order. This covariance contraction holds rigorously under the linear-Gaussian assumptions; a proof can be found in the appendix~\ref{apd: lower covariance}.

Substituting the larger prior-based quantity $P_{ff}$ into Eq. \eqref{eq:ess-bound}:
\begin{equation}
  \label{eq:hat-ess-bound}
  \hat{\mathrm{ESS}}(\mathbf{y}_i)
  =
  J \exp\!\left(
    -\,(\mathbf{y}_i - \bar{\mathbf{y}}_\nu)^\top R^{-1}\,P_{ff}\,R^{-1}(\mathbf{y}_i - \bar{\mathbf{y}}_\nu)
  \right) \leq \text{ESS}(\mathbf{y}_i),
\end{equation}
therefore yields a smaller ESS estimate than the one corresponding to $\Sigma_{ff}$. We thus obtain a conservative diagnostic $\hat{\mathrm{ESS}}(\mathbf{y}_i)$, which can be interpreted as a lower bound on the true ESS and used as a safe diagnostic for guiding the grouping of outer samples.

We emphasize that evaluating the conservative diagnostic (Eq. \eqref{eq:hat-ess-bound}) itself is essentially free of forward model since the forecast covariance $P_{ff}$ is already produced by the EKI prediction step. More importantly, this diagnostic yields a substantial saving in the subsequent forward-model cost. In a naive strategy, one would first generate samples from the global GPP (forward step 1), propagate these samples through the forward solver (forward step 2), and estimate $\Sigma_{ff}$ from the resulting model outputs in order to diagnose the quality of importance sampling. If this diagnosis reveals severe weight degeneracy, the entire set of forward (step 2) evaluations for the global GPP proposal is wasted, since one must then construct grouped GPP and again propagate the grouped samples through the forward model. By contrast, our conservative diagnostic uses only forecast-level information (from step 1) to predict whether the global GPP is likely to be adequate. When $\hat{\mathrm{ESS}}(\mathbf{y}_i)$ indicates poor quality, we bypass the forward (step 2) evaluations of global GPP samples altogether and directly proceed to grouping. 

\textbf{Grouping outer samples}: To utilize the diagnostic \eqref{eq:hat-ess-bound}, we introduce a user-chosen threshold $S>0$ for the estimated ESS. Outer samples whose
\begin{equation}
    \hat{\mathrm{ESS}}(\mathbf{y}_i) < S
\end{equation}
are marked as \emph{problematic}, because they are expected to contribute disproportionately to the variance of the importance estimator under the current proposal. A practical choice for $S$ is on the order of parameter dimension $N^{d_{\boldsymbol\theta}}$. More generally, $S$ may also be selected based on empirical calibration or problem-dependent bounds motivated by effective dimension ideas in randomized linear algebra. 

If $\hat{\mathrm{ESS}}(\mathbf{y}_i) \ge S$ holds for the majority outer samples, we simply retain the global GPP as the proposal and proceed with importance sampling and gradient estimation without any grouping. Otherwise, we trigger the grouping step by collecting the problematic outer samples. Let
\begin{equation}
     \mathcal{I}_{\mathrm{prob}} \;=\; \bigl\{\, i : \hat{\mathrm{ESS}}(\mathbf{y}_i) < S \,\bigr\}
\end{equation}
denote the index set of problematic outer samples, and collect their associated observations into
\begin{equation}
    Y^{\mathrm{prob}} \;=\; \bigl\{\, \mathbf{y}_i : i \in \mathcal{I}_{\mathrm{prob}} \,\bigr\}.
\end{equation}

Instead of using a single pooled posterior for all problematic samples, we further partition $Y^{\mathrm{prob}}$ into $K \ge 2$ disjoint groups by clustering in observation space:
\begin{equation}
     Y^{\mathrm{prob}}
  \;=\;
  \bigsqcup_{k=1}^K Y_k,
  \qquad
  Y_k \;=\; \{\, \mathbf{y}_i : i \in \mathcal{I}_k \,\},
\end{equation}
where $\{\mathcal{I}_k\}_{k=1}^K$ forms a partition of $\mathcal{I}_{\mathrm{prob}}$ and the number of clusters $K$ is chosen such that each group collects outer samples with comparable observational behavior. For each group $Y_k$, we then construct a group-specific pooled posterior by pooling over the outer samples only in that group, while retaining the same pooling rule as in the global GPP construction.

\textbf{Generate samples for grouped GPP}. For each group of outer samples $Y_k$, $k=1,2,\dots,K$, we associate a group-specific GPP of the form
\begin{equation}
    p(\boldsymbol\theta | Y_k)\;\propto\;
  p(\boldsymbol\theta)\,\prod_{i \in \mathcal{I}_k} p(\mathbf{y}_i | \boldsymbol\theta)^{\nu'_i},\quad \Sigma_{i \in \mathcal{I}_k} \nu'_i=1
\end{equation}
where $\mathcal{I}_k$ denotes the index set of outer samples in group $k$, and $\nu'_i$ denotes the normalized weight. To generate samples from this grouped pooled posterior, we reuse the EKI update in Eq.~\eqref{eq:eki final}, but with the pooled observation mean replaced by $\bar{\mathbf{y}}_\nu^{(k)} \;=\; \sum_{i \in \mathcal{I}_k} \nu'_i\, \mathbf{y}_i $.

Applying Eq.~\eqref{eq:eki final} with $\bar{\mathbf{y}}_\nu^{(k)}$ and the same initial ensemble $\{\boldsymbol\theta^{(j)}\}_{j=1}^J$ yields an updated ensemble $\{\boldsymbol\theta^{(j)} | Y_k\}_{j=1}^J$, which we interpret as samples from the grouped pooled posterior corresponding to $Y_k$.

Importantly, this regeneration step does not require any additional forward-model simulations. This follows from the affine-map structure of EKI: as long as we reuse the same initial ensemble $\{\boldsymbol\theta^{(j)}\}_{j=1}^J$, the associated model forecasts $\{f(\boldsymbol\theta^{(j)})\}_{j=1}^J$ have already been computed in the original EKI run before grouping and can be reused for all groups. This property sharply distinguishes our ensemble-based construction from using conditional diffusion or MCMC as the sample generator when extending a single pooled posterior to multiple grouped posteriors. For example, in the conditional-diffusion setting, the forward-model cost of sample generation grows linearly with the number of groups, because each proposal sample follows its own backward SDE trajectory, each inner step requires evaluating the likelihood score and the forward model, and these forward evaluations cannot be reused across groups due to the independent noise realizations. In contrast, the proposed ensemble-based method is free of additional forward-model simulations when generating samples for extra groups, which is a key computational advantage of the grouped GPP–EKI framework.

\textbf{Gradient estimation}: Let $\mathcal{I}_{\mathrm{prob}} = \bigsqcup_{k=1}^K \mathcal{I}_k$ be the index set of problematic outer samples and let $\mathcal{I}_{\mathrm{ok}} = \{1,\dots,N\}\setminus \mathcal{I}_{\mathrm{prob}}$ denote the remaining outer samples for which the global GPP proposal is accepted. With the grouped GPP samples, the grouped estimator of the EIG gradient can be written as
\begin{equation}
  \label{eq:mc-eig-gradient-grouped}
  \begin{aligned}
  \nabla_{\mathbf{d}} I(\mathbf{d})
  \;\approx\; \frac{1}{N} \Bigg[
    &\sum_{i=1}^N \nabla_{\mathbf{d}} \log p(\mathbf{d},\mathbf{y}_i,\boldsymbol\theta_i) \\
    &- \sum_{i \in \mathcal{I}_{\mathrm{ok}}} \frac{1}{J} \sum_{j=1}^J
      \frac{p(\boldsymbol\theta'_j | \mathbf{y}_i,\mathbf{d})}
           {p_{\mathrm{GPP}}(\boldsymbol\theta'_j |Y_{\mathrm{ok}}, \mathbf{d})}\,
      \nabla_{\mathbf{d}} \log p(\mathbf{d},\mathbf{y}_i,\boldsymbol\theta_i,\boldsymbol\theta'_j) \\
    &- \sum_{k=1}^K \sum_{i \in \mathcal{I}_k} \frac{1}{J} \sum_{j=1}^J
      \frac{p(\boldsymbol\theta'_{k,j} | \mathbf{y}_i,\mathbf{d})}
           {p_{\mathrm{GPP},k}(\boldsymbol\theta'_{k,j} | Y_k, \mathbf{d})}\,
      \nabla_{\mathbf{d}} \log p(\mathbf{d},\mathbf{y}_i,\boldsymbol\theta_i,\boldsymbol\theta'_{k,j})
  \Bigg],
  \end{aligned}
\end{equation}
where $\{\boldsymbol\theta'_j\}_{j=1}^J$ are inner samples drawn from the grouped GPP $p_{\mathrm{GPP}}(\theta|\mathbf{d})$ based on $y_i, i \in \mathcal{I}_{\mathrm{ok}}$ and $\{\boldsymbol\theta'_{k,j}\}_{j=1}^J$ are inner samples drawn from the grouped pooled posterior $p_{\mathrm{GPP},k}(\theta|\mathbf{d})$ associated with $Y_k$.

\section{Sequential model discrepancy calibration framework}\label{sec: model discrepancy}

Model discrepancy is ubiquitous in real-world applications, posing challenges to both forward and inverse problems, not to mention active learning and BED. Under model discrepancy scenarios, we build on the framework in \cite{YANG2026114469} 

Specifically, to define the model discrepancy, the true system is written in general form
\begin{equation}
    \mathbf{y}=\mathcal{G}^\dagger(\boldsymbol{\theta}_\mathcal{G};\mathbf{d}),
\end{equation}
while the incorrect model is
\begin{equation}
    \mathbf{y}=\mathcal{G}(\boldsymbol{\theta}_\mathcal{G},\boldsymbol{\theta}_\mathcal{E} ;\mathbf{d}),
\end{equation}
where $\boldsymbol{\theta}_\mathcal{G}$ represents the original physical parameter, and $\boldsymbol{\theta}_\mathcal{E}$ represents the error parameter. Note that the system $\mathcal{G}^\dagger$ and model $\mathcal{G}$ may also serve as the right-hand-side of dynamic systems, and in that case design $\mathbf{d}$ usually appears in the additional measuring operator, rather than the input of dynamic systems. To calibrate the model discrepancy, we seek to optimize the error parameter $\boldsymbol{\theta}_\mathcal{E}$ such that the model $\mathcal{G}$ could approximate the system $\mathcal{G}^\dagger$.

In the context of BED, this framework determines optimal designs for physical parameters $\boldsymbol{\theta}_\mathcal{G}$ and error parameters $\boldsymbol{\theta}_\mathcal{E}$, respectively. The purpose of separating these two types of parameters rather than jointly seeking designs and updating is to avoid the high-dimensional joint parameter space $\{\boldsymbol{\theta}_\mathcal{G}, \boldsymbol{\theta}_\mathcal{E}\}$, especially when using a neural network to calibrate structural model discrepancy. At each stage in the sequential process, for relatively low-dimensional physical parameters, we retain standard BED methods to allow arbitrary distributions. For high-dimensional error parameters, we apply the GPP-EKI to find the optimal designs and use a gradient-based method to update values of error parameters. These two steps are repeated in each stage. The updated model benefits the updating in later stages, and the updated belief allows more effective model correction.

At each stage of sequential BED, the optimal design for physical parameters is determined using standard BED: 
\begin{equation}
\label{eq:BED_optimization_G}
    \mathbf{d}^*_\mathcal{G} = \argmax_{\mathbf{d}\in\mathcal{D}} \mathbb{E}[U(p(\boldsymbol{\theta}_\mathcal{G}|\mathbf{y};\boldsymbol{\theta}_\mathcal{E}) || p(\boldsymbol{\theta}_\mathcal{G})  )],
\end{equation}
where the posterior distribution of $\boldsymbol{\theta}_\mathcal{G}$ from the previous stage serves as the prior distribution $p(\boldsymbol{\theta}_\mathcal{G})$ for the current stage. The utility function $U$ could either be the KL divergence or the Wasserstein distance. Considering that $\boldsymbol{\theta}_\mathcal{G}$ is often low-dimensional, standard BED methods can be employed to solve the optimization problem in Eq.~\eqref{eq:BED_optimization_G}. With the optimal design $\mathbf{d}^*_\mathcal{G}$, the corresponding data $\mathbf{y}_\mathcal{G}$ is then used to update the belief of physical parameters by the Bayesian theorem conditioned on the current network coefficients $\boldsymbol\theta_\text{NN}$:
\begin{equation}
    p(\boldsymbol\theta_\mathcal{G}|\mathbf{y}_\mathcal{G},\mathbf{d}^*_\mathcal{G};\boldsymbol\theta_\mathcal{E})=\frac{p(\mathbf{y}_\mathcal{G}|\boldsymbol\theta_\mathcal{G},\mathbf{d}^*_\mathcal{G};\boldsymbol\theta_\mathcal{E})p(\boldsymbol\theta_\mathcal{G})  }{p(\mathbf{y}_\mathcal{G}|\mathbf{d}^*_\mathcal{G};\boldsymbol\theta_\mathcal{E})},
\end{equation}
which also provides the MAP estimation of physical parameters $\boldsymbol\theta_\mathcal{G}^*$:
\begin{equation}
    \label{eq:selet 1 theta}\boldsymbol\theta_\mathcal{G}^*=\argmax_{\boldsymbol\theta_\mathcal{G}}
 \{ p(\boldsymbol\theta_\mathcal{G}|\mathbf{y}_\mathcal{G},\mathbf{d}^*_\mathcal{G};\boldsymbol\theta_\mathcal{E})\}.
\end{equation}

On the other hand, the optimal design for neural network parameters is found by BED with the GPP-EKI:
\begin{equation}
    \begin{aligned}
        \mathbf{d}^*_\mathcal{E} &= \argmax_{\mathbf{d}\in\mathcal{D}} \mathbb{E}[U(p(\boldsymbol{\theta}_\mathcal{E}|\mathbf{y};\boldsymbol{\theta}_\mathcal{G}^*) || p(\boldsymbol{\theta}_\mathcal{E})  )],\\
    \end{aligned}
\end{equation}

At each stage of the sequential BED, we assume a Gaussian prior over the error parameters centered at their current value as their prior distribution. Once the design $\mathbf{d}^*_\mathcal{E}$ is determined, the resulting data $\mathbf{y}_\mathcal{E}$ is then used to update the network coefficients via standard gradient-based optimization:
\begin{equation}
    \boldsymbol\theta^*_\mathcal{E}=\argmax_{\boldsymbol\theta_\mathcal{E}} p(\mathbf{y}_\mathcal{E}|\boldsymbol\theta_\mathcal{G}^*,\mathbf{d}^*_\mathcal{E},\boldsymbol\theta_\mathcal{E}).
\end{equation}

At each stage, we immediately use the updated model ($\boldsymbol\theta^*_\mathcal{E}$) and the accumulated data for physical parameters, $\mathbf{y}_\mathcal{G}$, to refine the belief of the physical parameter $\boldsymbol{\theta}_\mathcal{G}$. Specifically, the physical parameter posterior is recomputed as:
\begin{equation}
p(\boldsymbol\theta_\mathcal{G}|\mathbf{Y}_\mathcal{G},\mathbf{D}_\mathcal{G};\boldsymbol\theta^*_\mathcal{E})=\frac{p(\mathbf{Y}_\mathcal{G}|\boldsymbol\theta_\mathcal{G},\mathbf{D}_\mathcal{G};\boldsymbol\theta^*_\mathcal{E})p(\boldsymbol\theta_\mathcal{G})  }{p(\mathbf{Y}_\mathcal{G}|\mathbf{D}_\mathcal{G};\boldsymbol\theta^*_\mathcal{E})},
\end{equation}
where $\mathbf{D}_\mathcal{G}=\{\mathbf{d}_\mathcal{G}^n,\mathbf{d}_\mathcal{G}^{n-1},\cdots,\mathbf{d}_\mathcal{G}^1\}$ and $\mathbf{Y}_\mathcal{G}=\{\mathbf{y}_\mathcal{G}^n,\mathbf{y}_\mathcal{G}^{n-1},\cdots,\mathbf{y}_\mathcal{G}^1\}$ denote the sequence of historical designs and corresponding observations, and $p(\boldsymbol\theta_\mathcal{G})$ is the original prior, which may be uniform.

\section{Experimental results}
\label{sec: Numerical Results}

We study a classical source inverse problem in the BED community~\cite{huan_gradient-based_2014, shen_bayesian_2023}, with specially designed setups of model discrepancy. A contaminant source is put into a time-varying 2-dimensional convection-diffusion field. The source inverse problem is to determine the optimal measurement location for a concentration value and infer the source location, possibly other information about this system as well~\cite{shen_bayesian_2023, yang_active_2025}. Model discrepancy could appear in different forms~\cite{YANG2026114469}, i.e., incorrect value of known parameters, or incorrect function form. The former one is named as a parametric error, where the model function form is still assumed to be correct. And the latter one is named as a structural error, which requires more advanced techniques to address. In this work, we assign the source location as the physical parameters and tackle the data choice by standard BED methods, and apply the proposed method to determine optimal design only for the error parameters.

Specifically, the system, governing equation of the concentration field $\mathbf{u}$, is written as:
\begin{equation}
    \frac{\partial \mathbf{u}(\mathbf{z},t;\boldsymbol\theta)}{\partial t}=D\nabla^2\mathbf{u}-\mathbf{v}(t) \cdot \nabla \mathbf{u}+S(\mathbf{z},t;\boldsymbol\theta),~~~\mathbf{z} \in [z_L,z_R]^2,~~t>0
    \label{eq:true_system_example},
\end{equation}
where $D$ is the diffusion coefficient assumed to be known with a true value of $1$, $\mathbf{v}=\{v_x,v_y\} \subseteq \mathbb{R}^2$ is a time-dependent convection velocity assumed to be known with true value of $v_x=v_y=50t$, $S$ denotes the source term with some parameters $\boldsymbol\theta$. In this work, the true system has an exponentially decay source term in the following form with the parameters $\boldsymbol\theta=\{\theta_x,\theta_y,\theta_h,\theta_s\} \in \mathbb{R}^4$:
\begin{equation}
    S(\mathbf{z},t;\boldsymbol\theta)=\frac{\theta_s}{2\pi\theta_h^2}\exp \left(-\frac{(\theta_x-z_x)^2+(\theta_y-z_y)^2}{2\theta_h^2} \right),
    \label{eq:true_source_term}
\end{equation}
where \(\theta_x\) and \(\theta_y\) denote the source location, \(\theta_h\) and \(\theta_s\) represent the source width and source strength. The solution field is defined on $[z_L,z_R]^2=[-3,2]^2$. The initial condition is \(\mathbf{u}(\mathbf{z}, 0;\boldsymbol\theta) = \mathbf{0}\), and a homogeneous Neumann boundary condition is imposed for all sides of the square domain. 

The measuring process is written as:
\begin{equation}
    y=\mathbf{u}(\mathbf{d},t)+\epsilon, \quad \epsilon \sim \mathcal {N}(0,\sigma^2),
\end{equation}
where $\mathbf{d}$ denotes the design variable, representing the measure location, which is a 2-dimensional spatial coordinate of the solution field. The temporal coordinates $t$ of the measurement are set as $0.05+n\times 0.005$ time units, where $n$ is the stage number ($n=1,2,\cdots$). At each stage, each experiment/design corresponds to a single measurement. Measurements contain a Gaussian noise $\epsilon$ with fixed variance $\sigma=0.05^2$.

For the physical parameters $\{\theta_x, \theta_y\}$, the inverse problem remains nonlinear with non-Gaussian posteriors in all experiments. The error parameters, however, follow different setups in the two test cases. The key motivations and conclusions of the numerical results are summarized below:

\textbullet\ We examine the use of EKI to generate samples from the pooled posterior and the use of this pooled posterior as an importance-sampling proposal in place of the prior. In a low-dimensional Gaussian–linear error parameter case with analytical expressions available, we show that the EKI ensemble faithfully approximates the pooled posterior, and that this pooled posterior is systematically closer to the individual posteriors than the prior (or at worst comparable). Detailed results can be found in Section~\ref{sec:Correct parametric error}.

\textbullet\ We investigate the effect of the proposed grouping strategy on the mismatch between proposal and target distributions and on the variance of the importance-sampling–based gradient estimator. In a structural error (incorrect model form) case with a high-dimensional neural-network correction, we show that grouping reduces proposal–target discrepancy, improves effective sample usage, and substantially lowers the variance of the estimated design gradients at a given computational cost. Detailed results can be found in Section~\ref{sec:Correct functional error}.

\textbullet\ We validate the proposed Bayesian experimental design framework by applying it to both parametric and structural model-error scenarios and comparing it with AD-EKI and diffusion-based baselines. Across these examples, our method achieves posteriors, corrected solution fields, and design choices that are comparable to those of the baselines, while requiring substantially fewer PDE solves, thereby demonstrating that the proposed components combine into an efficient and reliable BED scheme. Detailed results can be found in Sections~\ref{sec:Correct parametric error} and~\ref{sec:Correct functional error}.

\subsection{Parametric case}\label{sec:Correct parametric error}
This section considers an example with the discrepancy represented by a low-dimensional Gaussian–linear error parameter, while the physical parameters enter the forward model nonlinearly and thus have non-Gaussian posteriors. We use this setting to validate the proposed method without grouping the outer samples and compare with other baseline methods, i.e., AD-EKI~\cite{YANG2026114469} and contrastive diffusion~\cite{iollo2024bayesian}. In this example, we first examine the consistency of the resulting posteriors and designs across methods, then assess the quality of EKI sampling of the pooled posterior, and finally compare the PDE cost and analyze the limitations of using a single global proposal.

For the model discrepancy setup, we assume a parametric error in the source strength $\theta_s$: the forward model uses an initial value of $3$, whereas the true value in the system is $2$. The task is to infer the source location $\{\theta_x, \theta_y\}$ and correct the error parameter $\theta_s$. Note that the mapping from the error parameter to the measurement, $\theta_s \mapsto \mathbf{y}$, is linear, so this case serves as a toy example in which analytical expressions for many quantities of interest are accessible for comparison. This setup could also be cast as a standard BED problem with the joint unknown $\{\theta_x, \theta_y, \theta_s\}$~\cite{yang_active_2025}, at the cost of a higher-dimensional parameter space. We use the present parametric-error formulation as a simple starting point to verify the proposed method.

For the sampling configuration, we use 180 outer samples and an inner ensemble of size 180, which is sufficient for the one-dimensional error parameter in this example. This setup is shared by our GPP-EKI (without grouping) and the contrastive diffusion method. For AD-EKI, we use 30 outer samples, an inner ensemble size of 30, and 3 EKI iterations, which we found to be a relatively small configuration that is still sufficient in this case.

Overall, the proposed method achieves results comparable to those of the contrastive diffusion and AD-EKI baselines, but with substantially lower computational cost. We first present the evolution of the posterior distributions of the physical parameters (Fig.~\ref{fig: posterior parametric}) and the values of the error parameters (Fig.~\ref{fig: error parametric}) as an overall validation. For all three methods, the posteriors over the physical parameters and the inferred error parameters converge to the true values, demonstrating that the inverse problem under model discrepancy has been successfully solved. Moreover, the posterior patterns and design choices are qualitatively similar across methods, indicating that the estimated design gradients are close to each other. These observations support the effectiveness of the proposed gradient-estimation strategy, in which EKI-generated samples from the pooled posterior are used as proposals for importance sampling.

\begin{figure}[H]
    \centering
    \includegraphics[width=\linewidth]{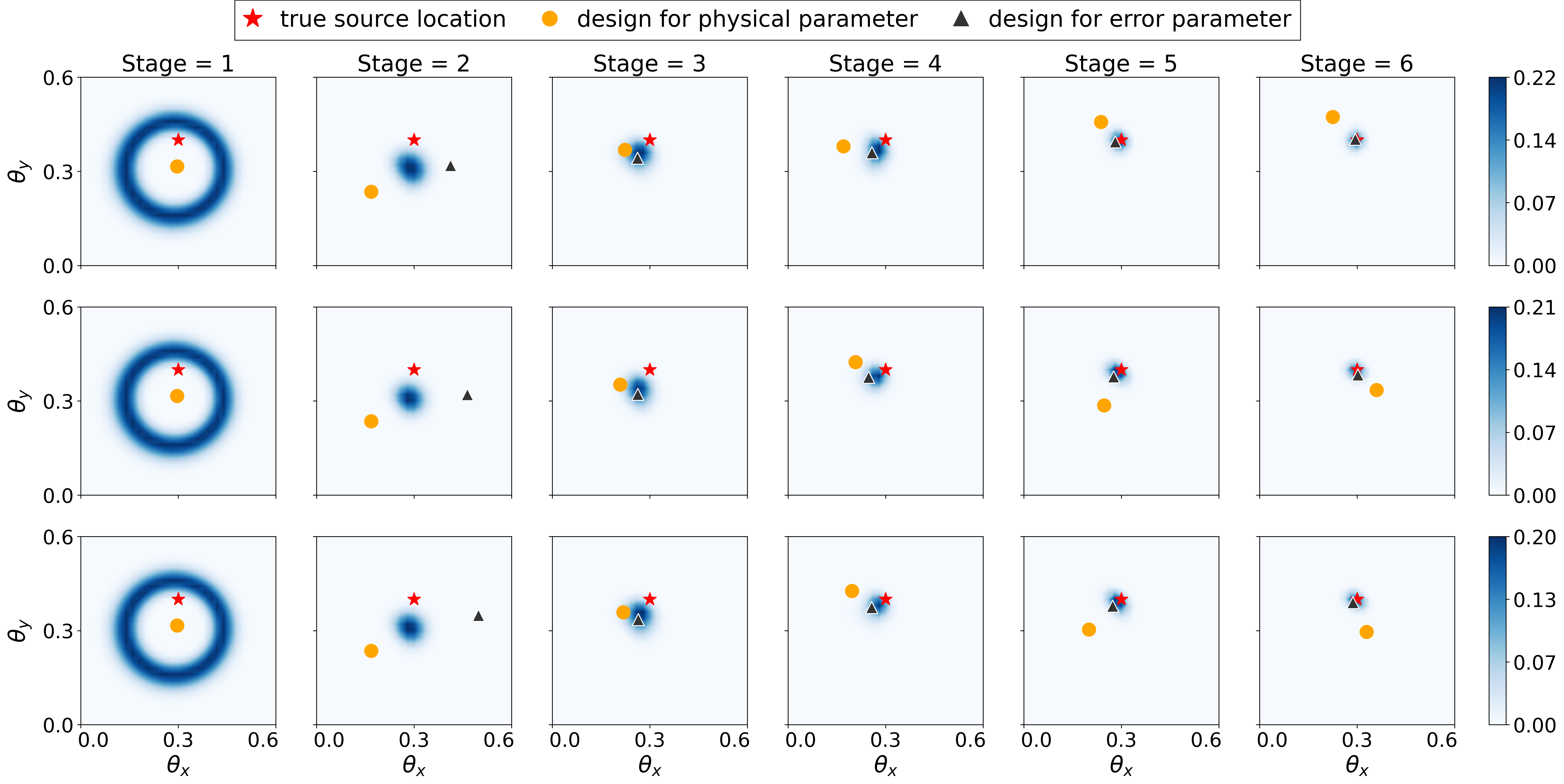}
    \caption{Posteriors of the physical parameter. From top to bottom, the rows present the estimates obtained by contrastive diffusion, AD-EKI, and GPP-EKI. The beliefs are defined on $[0,1]^2$ and shown zoomed in $[0, 0.6]^2$ for clarity.}
    \label{fig: posterior parametric}
\end{figure}

\begin{figure}[H]
    \centering
    \includegraphics[width=0.5\linewidth]{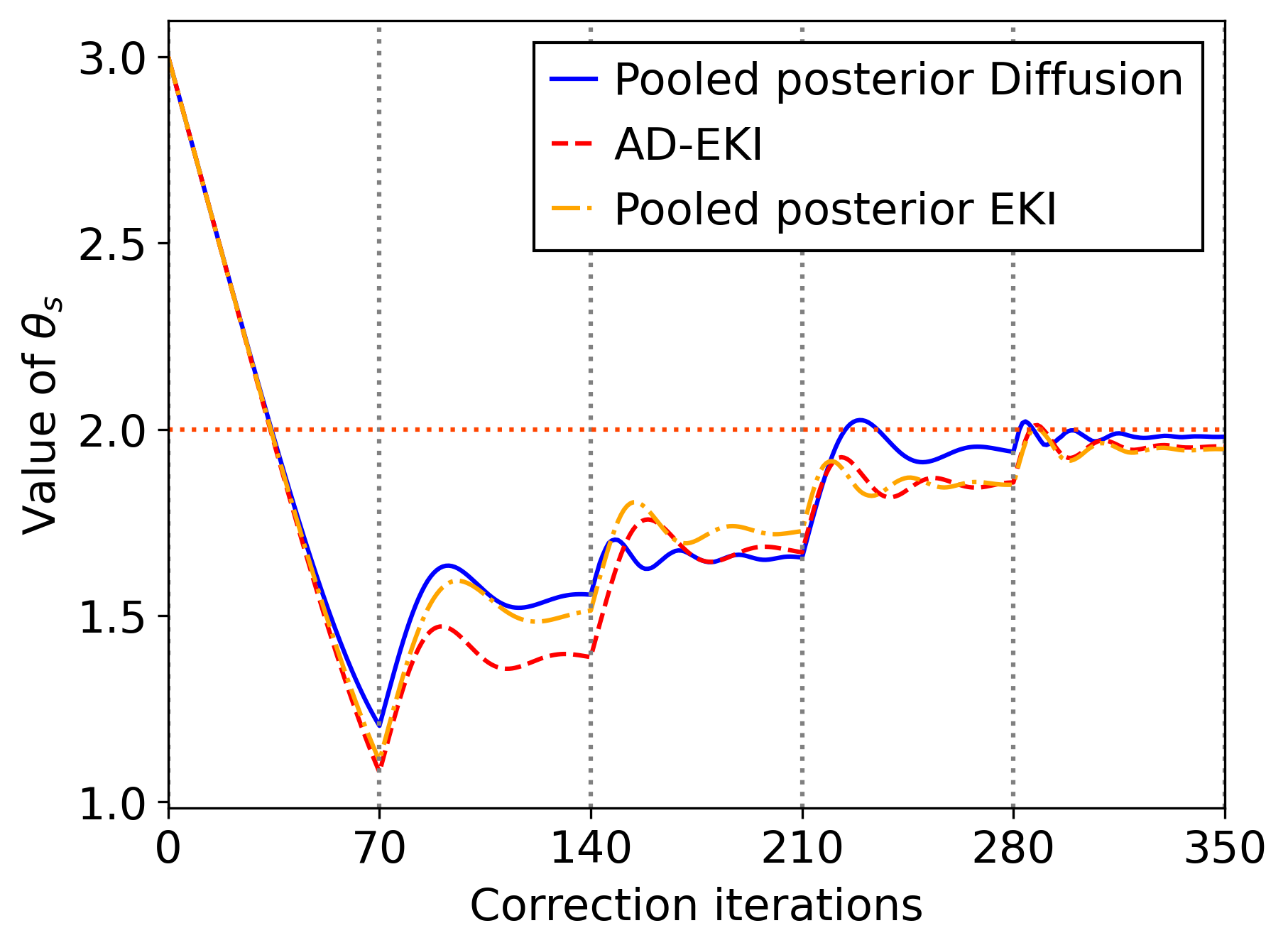}
    \caption{Updating trajectory of the error parameter.}
    \label{fig: error parametric}
\end{figure}

We then examine in more detail the pooled posterior samples generated by EKI. Figure~\ref{fig:samples_eki_pp} shows that the empirical distribution of the EKI ensemble closely matches the analytical pooled posterior along the error parameter $\theta_s$. The analytical pooled posterior has mean $3.00$ and variance $0.46$, whereas the EKI ensemble with $180$ particles yields mean $3.094$ and variance $0.424$, which is consistent with Monte Carlo sampling error at this ensemble size. This agreement indicates that the proposed sampling procedure is reliable in this Gaussian–linear setting.

\begin{figure}[H]
    \centering
    \begin{minipage}{0.48\linewidth}
        \centering
        \includegraphics[width=\linewidth]{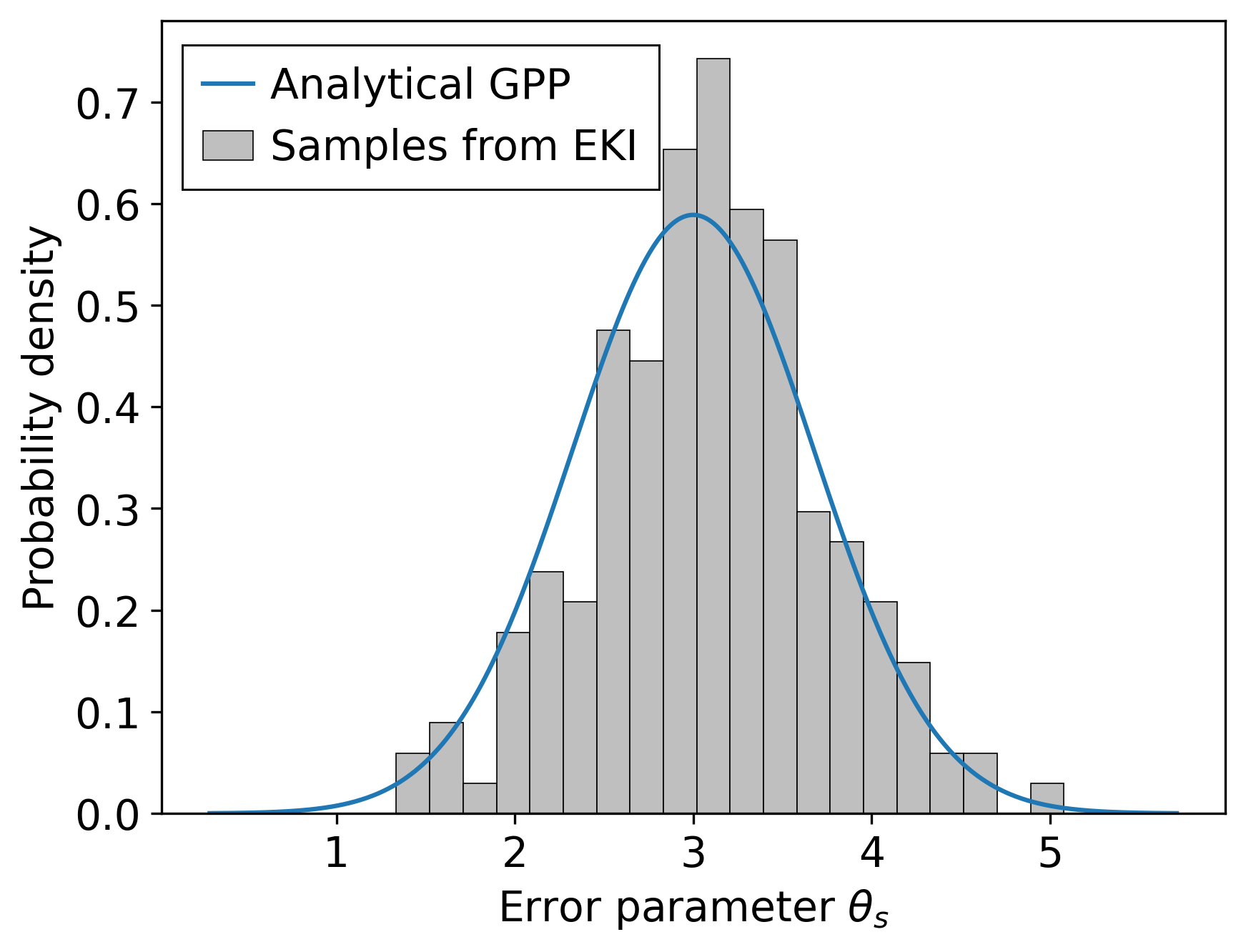}
        \caption{Samples for geometric pooled posterior drawn by EKI}
        \label{fig:samples_eki_pp}
    \end{minipage}
    \hfill
    \begin{minipage}{0.4\linewidth}
        \centering
        \includegraphics[width=\linewidth]{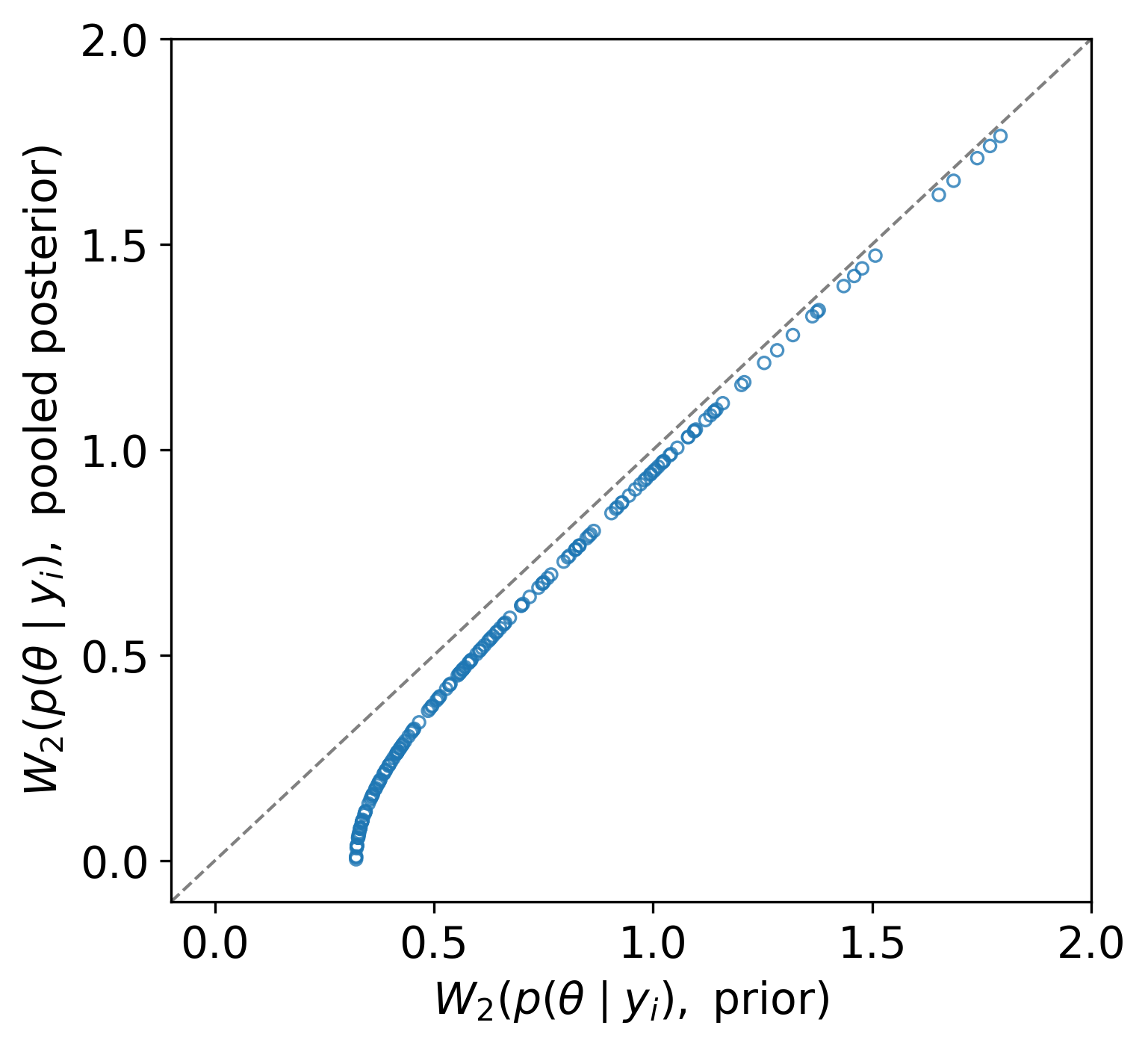}
        \caption{Scatter plots comparing, for each outer sample, the distance from its individual posterior to the global geometric pooled posterior versus the distance to the prior.}
        \label{fig:distance_pp_prior}
    \end{minipage}
\end{figure}

The PDE costs of the different methods, reported in Table~\ref{tab:pde_cost_parametric_error}, highlight the computational advantage of the proposed approach. For the diffusion-modeling method, a large number of PDE simulations is required because each sample at each backward SDE step needs a forward PDE solve to evaluate the likelihood score. For AD-EKI, the cost is substantially reduced: the small ensemble size and early stopping lead to fewer PDE solves when approximating the posterior for each outer sample. However, the total cost remains high because the PDE simulations for different outer samples cannot be reused. In contrast, the proposed GPP method is an importance sampling scheme that allows the inner samples from the proposal distribution to be reused across outer samples. This reuse substantially reduces the total number of PDE solves. The trade-off is that a single set of inner samples from a fixed proposal distribution may not be equally well matched to all individual posteriors, especially when these posteriors differ substantially across outer samples.

\begin{table}[H]
  \centering
  \caption{PDE cost in the parametric error case}
  \label{tab:pde_cost_parametric_error}
  \begin{tabular}{lccc}
    \toprule
     & Diffusion & AD-EKI & GPP-EKI \\ \midrule
    cost & 36360 & 2700 & 540 \\ \bottomrule
  \end{tabular}
\end{table}

Adopting the pooled posterior as the proposal is one strategy to mitigate this proposal–target mismatch and is theoretically preferable to using the prior as an empirical proposal. Figure~\ref{fig:distance_pp_prior} reports, for each outer sample $y$, the Wasserstein distance between its individual posterior and either the prior (horizontal axis) or the pooled posterior (vertical axis). All points lie below the diagonal, indicating that the pooled posterior is systematically closer to the individual posteriors than the prior and thus provides a better global proposal. However, for tail observations with large prior–posterior distance, the discrepancy remains substantial even when using the pooled posterior. In this low-dimensional setting with a moderately large ensemble, the resulting importance weights are still well behaved and the design gradient is not significantly affected, but the example already reveals a structural limitation of using a single global proposal. This motivates the grouped GPP variant introduced in the next section, where we further investigate its behavior in a higher-dimensional example.

\subsection{Structural case}\label{sec:Correct functional error}

In this section, we consider a structural error case where the model discrepancy is represented by a high-dimensional neural-network correction term, leading to strongly non-Gaussian posteriors. Such a high-dimensional parameterization, combined with a limited outer-sample budget, further amplifies weight degeneracy, making it difficult for a single global proposal to adequately match the complex target distribution. To address this issue, we group the outer samples to construct localized proposals that more systematically improve the alignment between proposal and target. We use this example to assess the performance of the proposed grouped GPP-EKI scheme against modified AD-EKI and random design; the results below first show the overall behavior of the posteriors and corrected fields, and then summarize the effects of grouping on proposal quality, estimator variance, and PDE cost.

For model discrepancy, we assume the known forcing term as:
\begin{equation}
    S(\mathbf{z},t;\bm\theta)=\frac{3\theta_s}{\pi\left(\frac{(\theta_x-z_x)^2+(\theta_y-z_y)^2}{2\theta_h^2}+2\theta_h^2\right)},
    \label{eq: c sourse}
\end{equation}
which differs from the exponentially decaying source term as defined in Eq.~\eqref{eq:true_source_term} for the true system. Without addressing this model structural error, the inference of source location through standard BED yields biased results. In this work, a neural network $\mathbf{NN}(z_x,z_y,\theta_x,\theta_y;\boldsymbol{\theta_\text{NN}})$ is employed to characterize the model structural error. We utilize a fully connected neural network consisting of 37 parameters, a configuration that introduces a higher-dimensional parameter space than the example of parametric model error in Section~\ref{sec:Correct parametric error}. The neural network takes a 2-dimensional input vector, representing the relative distance from a specific grid point to the source location. The output is a scalar value for this point, indicating the discrepancy between the true and modeled source term values. The fully connected network is designed with two hidden layers, each containing four neurons, and employs a tanh activation function for both layers.

For the sampling configuration of the GPP-EKI method, we employ approximately $5000$ outer samples and an inner ensemble of size $5000$, and partition the outer samples into three groups for illustration. For the modified AD-EKI, we draw $5000$ outer parameter samples and map them to $200$ weighted grid points in the data space, with an inner ensemble size of $60$ and $2$ EKI iterations. The same learning rate and gradient-ascent step size on the design are employed for all methods to ensure a fair comparison. The choice of $5000$ samples is modest relative to the $37$-dimensional neural-network parameter space, so this experiment is deliberately conducted in a data-limited regime where the benefits of grouping are most relevant. Further implementation details are provided in our code.

We first examine the evolution of the beliefs on the physical parameters under different design strategies, as shown in Fig.~\ref{fig: posterior structural}. We compare three approaches: random design of model correction experiments, modified AD-EKI, and the proposed GPP-EKI. For both AD-EKI and GPP-EKI, the selected designs and the resulting posteriors over the physical parameters are almost indistinguishable and gradually concentrate around the true parameter values. This indicates that, in this structural error setting, GPP-EKI achieves design choices and gradient estimates that are comparable to those produced by AD-EKI. In contrast, when the designs for model correction are chosen purely at random, the posterior fails to concentrate around the true parameter, which further highlights the importance of appropriate design selection in the presence of model discrepancy.

\begin{figure}[H]
    \centering
    \includegraphics[width=\linewidth]{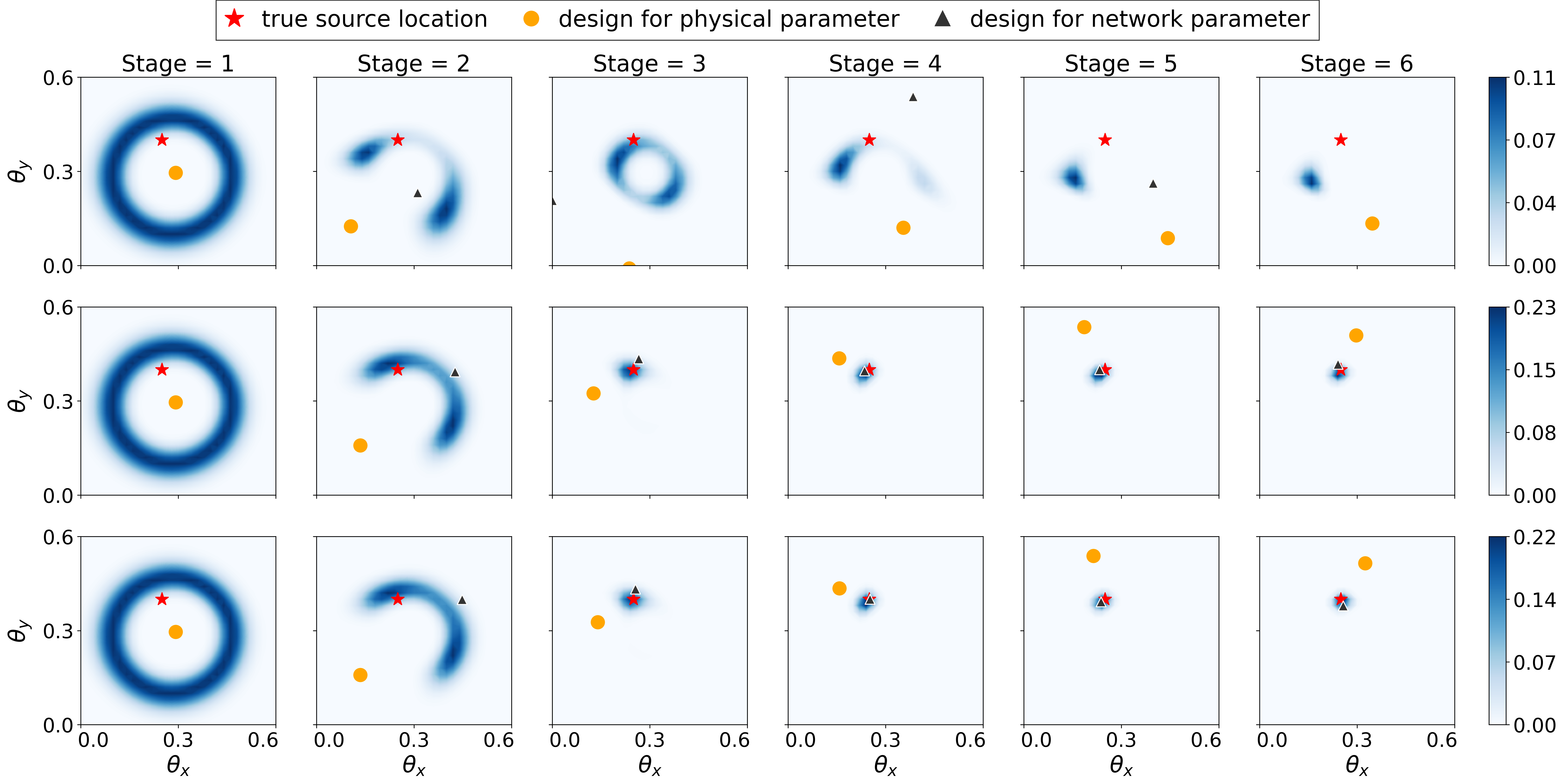}
    \caption{Posteriors of the physical parameter. From top to bottom, the rows present the estimates obtained by random selection, modified AD-EKI, and GPP-EKI. The beliefs are defined on $[0,1]^2$ and shown zoomed in $[0, 0.6]^2$ for clarity.}
    \label{fig: posterior structural}
\end{figure}

We then inspect the relative error between the corrected solution field and the true field, see Fig.~\ref{fig: structural_field_error}. Since the neural-network correction term has no “true” parameter for direct comparison, we focus on the resulting PDE solutions. Overall, the designs selected by AD-EKI and GPP-EKI yield solution fields whose relative errors are significantly smaller than those obtained with random design, especially in the vicinity of the design locations used to update the beliefs on the physical parameters. This shows that the model correction is already effective with minimal data. It also supports our previous conclusion that, even while reusing inner samples across outer samples, GPP-EKI still provides gradient estimates of comparable quality to those obtained by generating fresh inner ensembles for each outer sample as in AD-EKI, while keeping the computational cost low.

\begin{figure}[H]
    \centering
    \includegraphics[width=\linewidth]{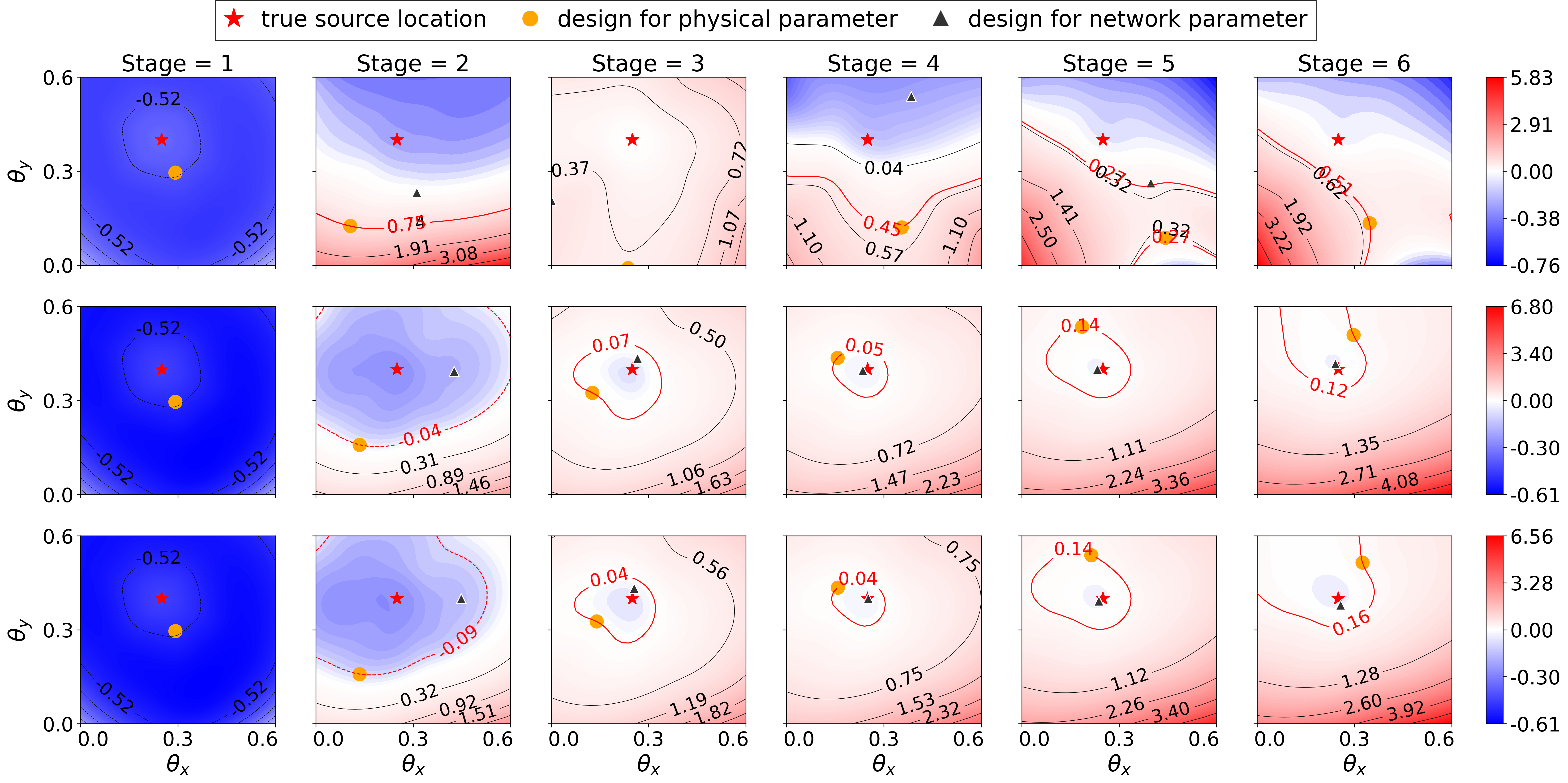}
    \caption{Relative error between the solution field and the true field. From top to bottom, the rows present the estimates obtained by contrastive diffusion, AD-EKI, and GPP-EKI. The beliefs are defined on $[0,1]^2$ and shown zoomed in $[0, 0.6]^2$ for clarity. The contour value at the design point in first row stage 3 is 0.20.}
    \label{fig: structural_field_error}
\end{figure}

Table~\ref{tab:pde_cost_structural_error} compares the PDE costs of the different methods. For the diffusion modeling approach used to sample from the pooled posterior, we adopt a basic reverse SDE sampler with 200 time steps. To obtain 5000 inner samples, this requires approximately 200×5000 forward-model evaluations. For a realistic nonlinear PDE model, such a cost is essentially prohibitive. More advanced sampling method of diffusion modeling may help reduce this cost. For AD-EKI, the cost is also high because a large number of outer samples must be treated separately. Although each outer sample requires only a relatively small ensemble (60) and can often be handled with early stopping, the total cost still scales multiplicatively with the number of outer samples (5000×60×2), and hence remains comparable to that of diffusion modeling. In return, AD-EKI provides high-quality inner gradient estimates for each individual outer sample. To overcome this constraint, we also consider a modified AD-EKI that exploits the low dimensionality of the observation space. Instead of drawing many random outer samples in parameter space, we construct a quadrature-like grid in the observation space and assign appropriate weights to these nodes, which substantially reduces the effective number of outer samples. This modification brings the PDE cost down to a more acceptable level, but it relies crucially on the observation space being low-dimensional and therefore does not extend to more general cases.

In contrast, the proposed GPP-EKI method does not rely on the dimension of the observation space. Here is a detailed breakdown. With 5000 outer samples, each requires one PDE solve to generate its synthetic observation. To construct three grouped pooled posteriors with 5000 inner samples per group, we first propagate an ensemble from prior through the forward model once, use the intermediate EKI quantities to estimate the quality of the importance sampling weights, and then perform grouping based on these diagnostics. The grouping step itself does not require any additional PDE solves. After grouping, we obtain three pooled posteriors with a total of 3×5000=15000 inner samples, using only 5000 PDE solves in this stage. Evaluating the design gradient then requires forward solves for all inner samples at the chosen design locations, which contributes another 15000 PDE solves. This evaluation step is required for any importance sampling scheme that uses a non-prior proposal, which is not specific to GPP-EKI. In total, the PDE cost of GPP-EKI in this example is therefore 5000+5000+15000=25000 solves, which is of the same order as that of the modified AD-EKI but achieved without relying on any special trick in observation space. Overall, GPP-EKI requires far fewer PDE solves for sample generation than the contrastive diffusion method by the feature that forming grouped pooled posteriors does not incur additional PDE cost, but also achieves a similar performance as AD-EKI.

\begin{table}[H]
  \centering
  \caption{PDE cost in structural error case}
  \label{tab:pde_cost_structural_error}
  \begin{tabular}{lccccc}
    \toprule
     & Random & Diffusion & AD-EKI & (Modified) AD-EKI & GPP-EKI \\ \midrule
    cost & - & 1,000,000 & 600,000 & 30,000 & 25,000\\ \bottomrule
  \end{tabular}
\end{table}

Figure~\ref{fig: lower bound} illustrates the grouping of outer samples $y$ induced by the proposed diagnostic criterion. We use the conservative diagnostic to predict the importance sampling performance and select groups directly in the observation space. The resulting grouping closely approximates the grouping that would be obtained by actually running importance sampling and then clustering based on the realized weights, but without the need to perform this expensive preliminary importance sampling step. Moreover, the groups clearly separate the main mass of the $y$ distribution from its tail structures, so that each group has a pooled posterior that is significantly better matched to its members than the global pooled posterior. In practice, one may introduce a relaxation parameter or other selection heuristics to obtain more balanced group sizes, although we do not explore such refinements here.

\begin{figure}[H]
    \centering
    \includegraphics[width=0.55\linewidth]{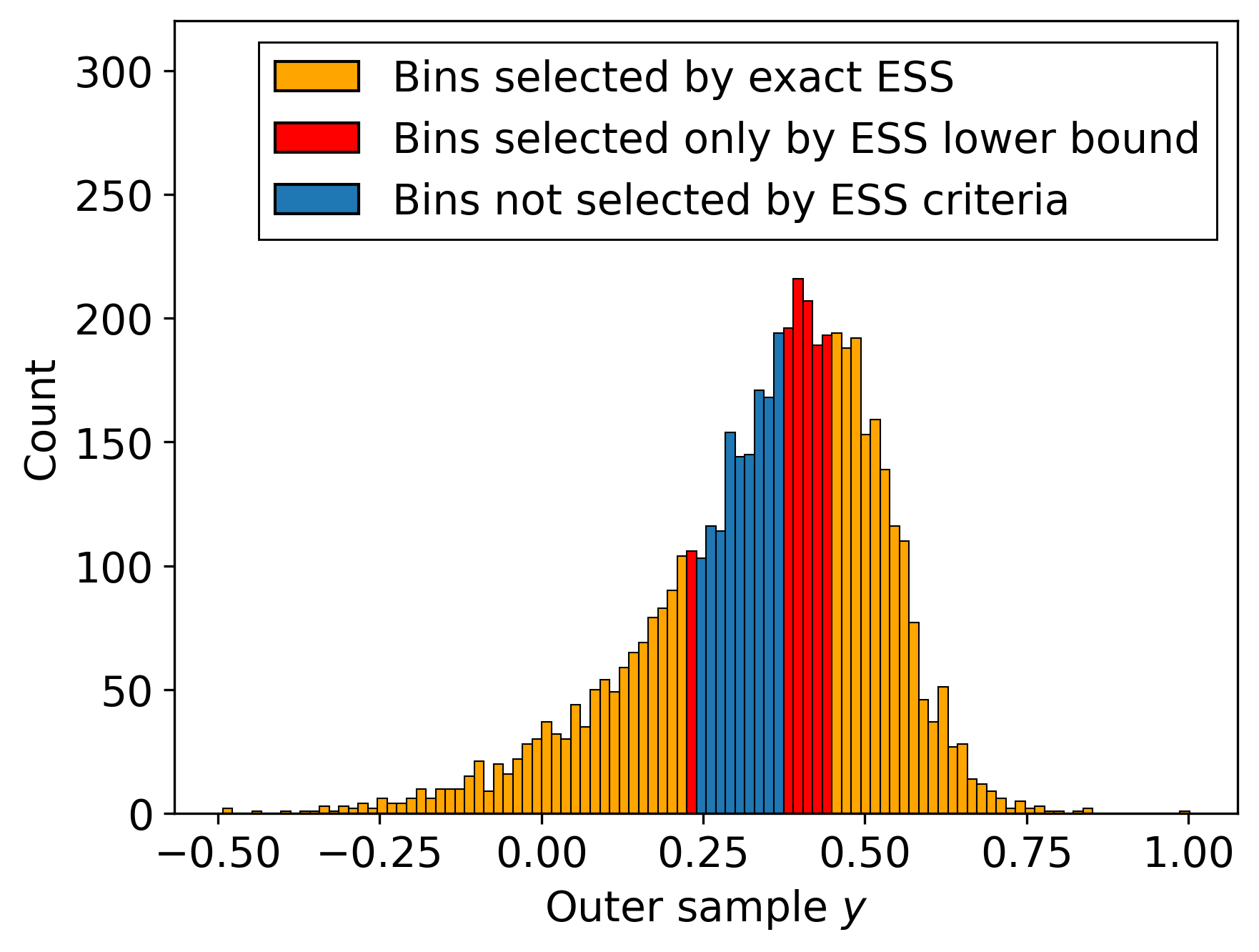}
    \caption{Histogram of all outer‐sample values $y$, annotated by their selection status under the proposed ESS diagnostics. Left orange+red region is Group 1, right orange+red region is Group 2, and blue bins form Group 3.}
    \label{fig: lower bound}
\end{figure}

To quantify the benefit of grouping, we examine, for every outer sample in each group, the distance between its individual posterior and both the global pooled posterior and the corresponding grouped pooled posterior, see Fig.~\ref{fig: no error reward map}. For the outer samples in the tail regions of the $y$ distribution, that is, in groups 1 and 2, the distance to the grouped pooled posterior is markedly smaller than the distance to the global pooled posterior. Equivalently, the grouped pooled posterior overlaps much more strongly with the individual posteriors than the global pooled posterior does, which leads to a substantial improvement in the quality of the importance sampling estimates. There remain a few outer samples whose individual posteriors are still relatively far from even the grouped pooled posterior. This residual mismatch is a common limitation of methods that reuse inner samples across outer samples. As a refinement, such problematic outer samples could be handled separately by AD-EKI that generate dedicated inner ensembles for each outer sample at low additional cost. For the third group, which corresponds to outer samples near the main mass of the $y$ distribution, the improvement from grouping is less pronounced, because the global pooled posterior is already a reasonably good proposal. In principle, one could exploit this observation and use even cheaper proposals, such as directly reusing prior or outer samples for this group, thereby trading a small loss in accuracy for further savings in computational effort. In the present work, we do not pursue these variants and instead retain the grouped pooled posterior for all groups in order to maintain a consistent and simple framework.

\begin{figure}[H]
  \centering
  \begin{subfigure}[t]{0.32\textwidth}
    \centering
    \includegraphics[width=\linewidth]{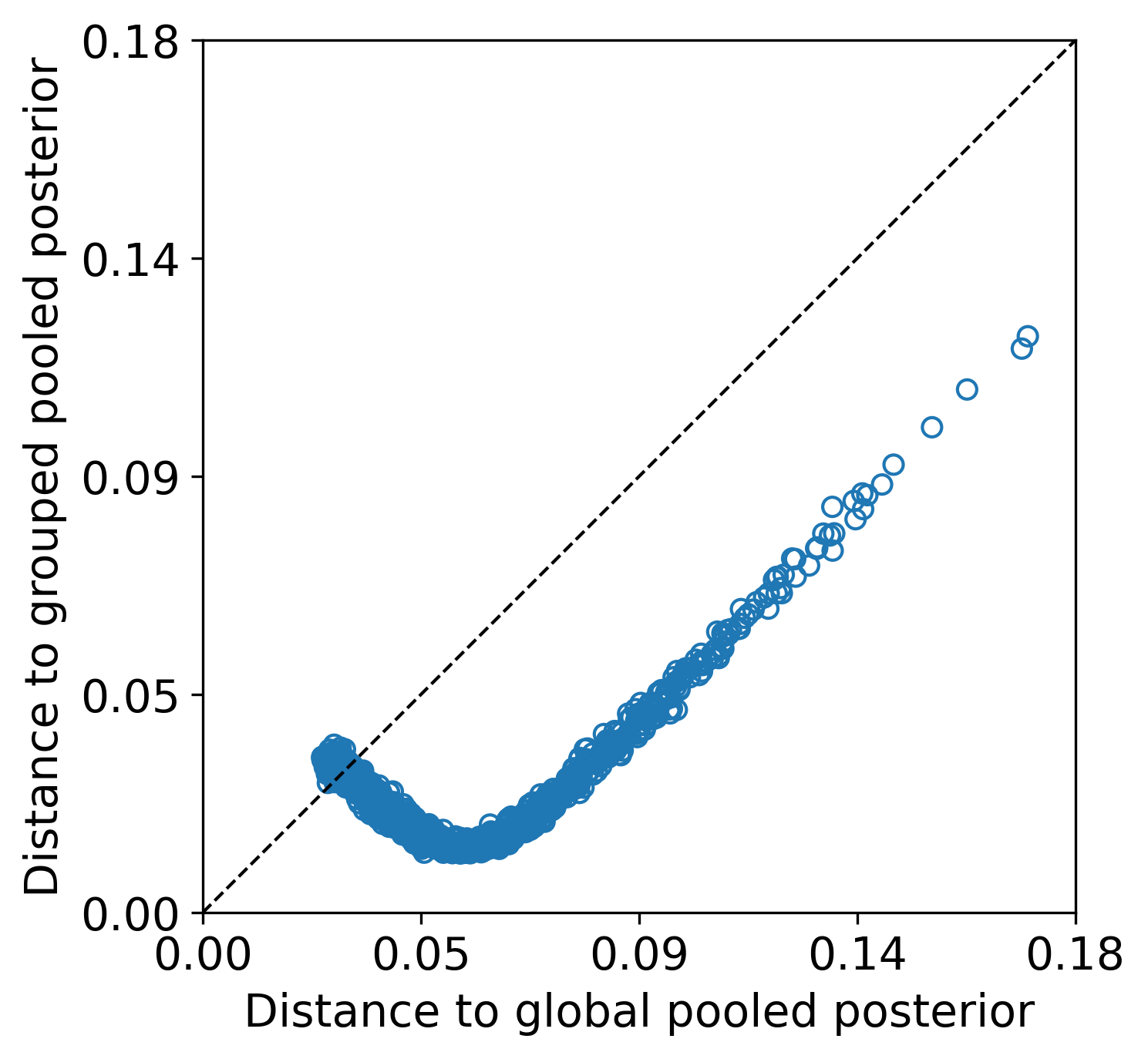}
    \caption{Group 1}
    \label{fig: group 1}
  \end{subfigure}
  \begin{subfigure}[t]{0.32\textwidth}
    \centering
    \includegraphics[width=\linewidth]{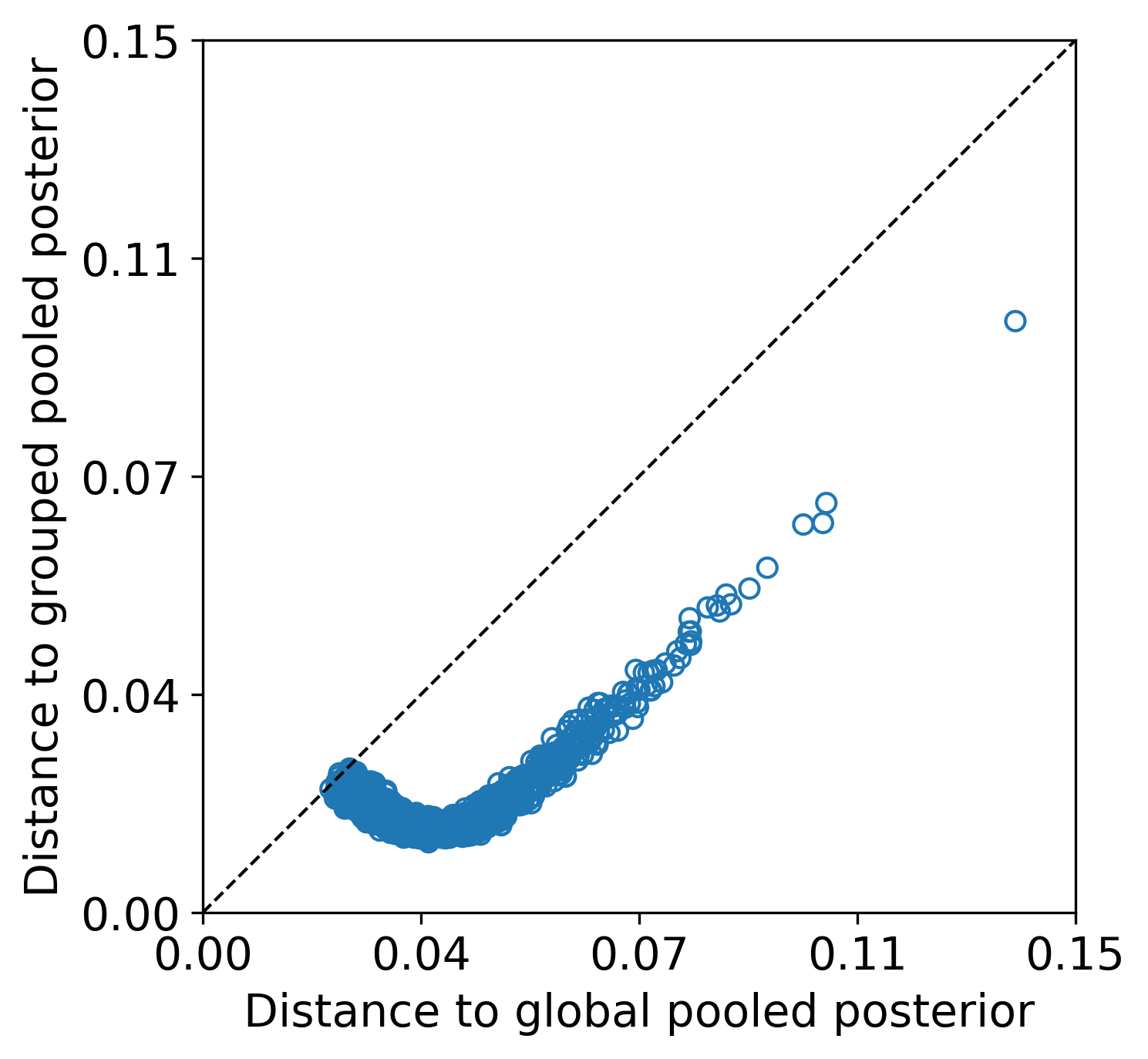}
    \caption{Group 2}
    \label{fig: group 2}
  \end{subfigure}
  \begin{subfigure}[t]{0.32\textwidth}
    \centering
    \includegraphics[width=\linewidth]{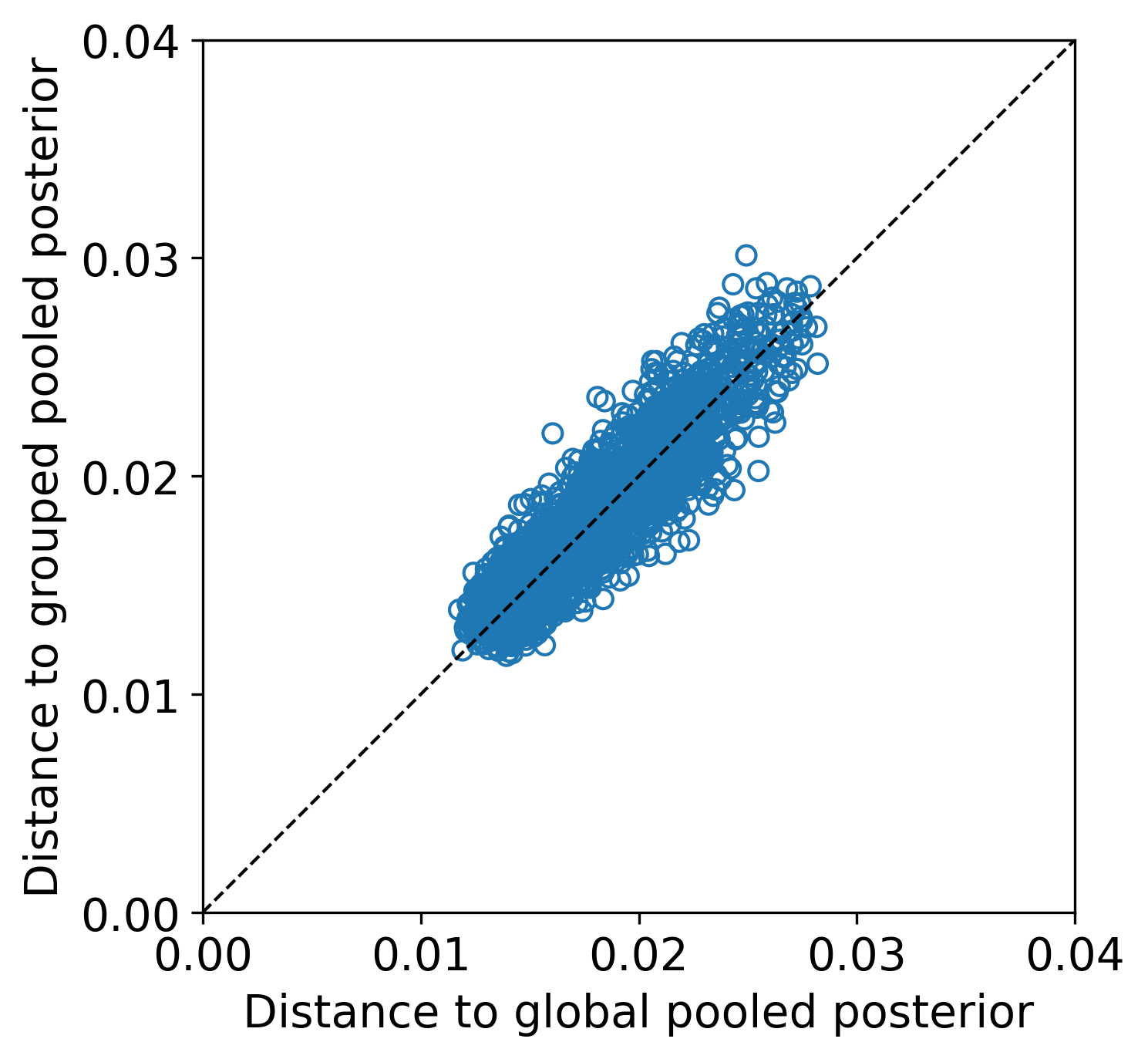}
    \caption{Group 3}
    \label{fig: group 3}
  \end{subfigure}
  \caption{Scatter plots comparing, for each outer sample, the distance from its individual posterior to the global geometric pooled posterior versus the distance to the grouped geometric pooled posterior.}
  \label{fig: no error reward map}
\end{figure}

We also provide a more detailed visual comparison for a single representative outer sample, see Fig.~\ref{fig: structural_corner}. We select five representative components of the high-dimensional neural-network parameter and plot a corner diagram of the approximate individual posterior, the global pooled posterior, and the grouped pooled posterior. All three distributions are approximated by EKI ensembles. In dimensions 2 and 3, the individual posterior and the global pooled posterior are strongly separated, with very little overlap, which would lead to poor importance sampling performance. The grouped pooled posterior substantially reduces this mismatch and is therefore expected to yield a much more reliable importance sampling estimate. In dimensions 0 and 14, the mismatch between the individual posterior and the global pooled posterior is more moderate but still visible, and again the grouped pooled posterior provides a noticeable improvement. In dimension 10, the global pooled posterior already matches the individual posterior quite well. Overall, this example confirms that the grouped pooled posterior systematically alleviates the mismatch between the global pooled posterior and the individual posteriors in the high-dimensional parameter space.

\begin{figure}[H]
    \centering
    \includegraphics[width=0.8\linewidth]{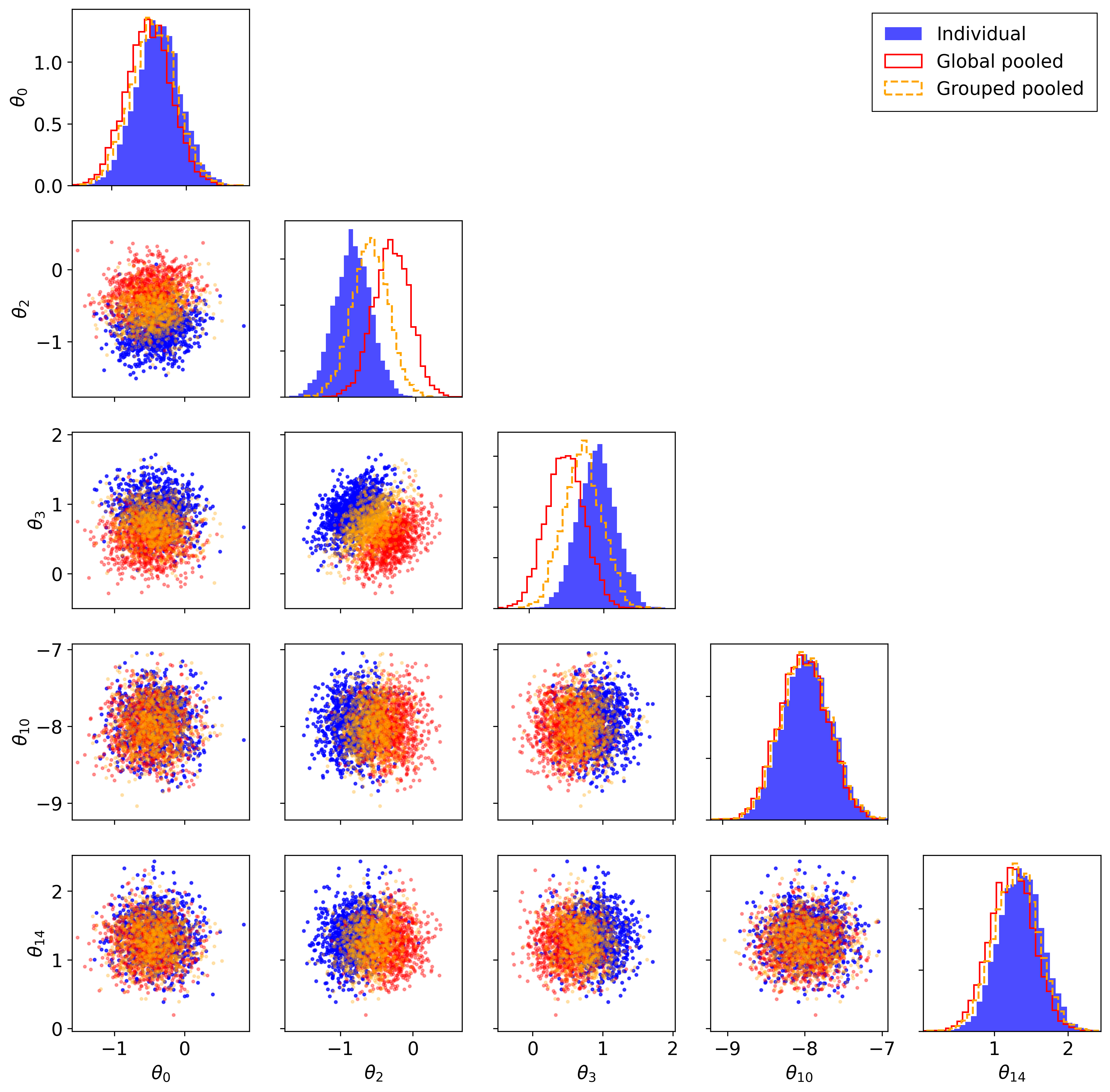}
    \caption{Corner plot of selected coordinates of the parameter vector, comparing individual, global pooled, and grouped pooled posterior samples.}
    \label{fig: structural_corner}
\end{figure}

Finally, Table~\ref{tab:Variance in importance sampling} summarizes the variability of the design-gradient estimator obtained by importance sampling. For this experiment, we fix the set of outer samples and generate multiple realizations of inner samples by changing only the random seed used in the inner sampling step, then compute the gradient of the EIG w.r.t. the design for each realization. The sample standard deviation of these gradient estimates is used as a measure of estimator quality. Without grouping, and using either 5000 or 15000 inner samples, the standard deviation of the gradient remains larger than 10, and increasing the number of inner samples produces almost no improvement. This is a typical symptom of weight degeneracy, where the target estimates no longer follow the standard Monte Carlo scaling rule with respect to the number of samples. Under the proposed grouping strategy, using the same total number of inner samples, 15000, the standard deviation of the gradient estimates is reduced to roughly one-third of the ungrouped value. This directly demonstrates that grouping the outer samples and using grouped pooled posteriors as proposals can substantially reduce the variance of importance sampling in the structural error setting.

\begin{table}[H]
  \centering
  \caption{Standard deviation in importance sampling}
  \label{tab:Variance in importance sampling}
  \begin{tabular}{cc}
    \toprule
      & Std \\ \midrule
    No group / 5000 sample size  & [12.6018, 10.1522] \\ 
    No group / 15000 sample size  & [12.7540, 10.5268] \\ 
    Group / 15000 sample size  & [4.0151, 3.4345] \\ \bottomrule
  \end{tabular}
\end{table}



\section{Conclusions}
\label{Conclusion}

In this work, we addressed the computational challenge of handling heterogeneous posterior families in nested BED by proposing a grouped geometric pooled posterior strategy. The key idea is to partition outer samples into groups so that each group admits a better-matched pooled proposal. Importantly, this grouped strategy retains the computational efficiency of shared-proposal approaches by avoiding additional forward-model evaluations. We systematically validated this framework through both parametric and structural model-discrepancy problems. In the Gaussian-linear parametric setting, we verified that the EKI sampler faithfully recovers the analytical pooled posterior, confirming its reliability. More importantly, in the high-dimensional structural error case involving neural-network corrections, our results revealed the limitations of a global "one-size-fits-all" proposal, which suffers from weight degeneracy due to poor coverage of tail distributions. By contrast, our grouping strategy improves the match between each group-specific pooled proposal and the corresponding posterior family within that group, and lowers the variance of the design-gradient estimator compared to the ungrouped baseline. Overall, the proposed method combines the robustness of outer-sample-specific proposals with the efficiency of shared-proposal approaches. These results suggest that GPP-EKI provides a scalable framework for BED in high-dimensional and computationally intensive settings where traditional nested sampling is intractable.


\appendix
\section{Appendix: proof of the equivalence}\label{apd: proof of proportion}

We provide the algebraic derivation for the mean-observation formulation of the geometrically pooled posterior. We show that the product of weighted Gaussian likelihoods is proportional to a single Gaussian likelihood acting on a precision-weighted mean.

Let $f(\boldsymbol\theta)$ denote the forward map. The geometrically pooled likelihood function is defined as:
\begin{equation}
\mathcal{L}(\boldsymbol\theta) \;\propto\; \prod_{i=1}^N p(\mathbf{y}_i | \boldsymbol\theta)^{\nu_i} \;\propto\; \exp\left( - \frac{1}{2} \sum_{i=1}^N \nu_i \, \| \mathbf{y}_i - f(\boldsymbol\theta) \|^2_{\boldsymbol\Sigma_i} \right),
\end{equation}
where we denote the inner term as $J(\boldsymbol\theta)$:
\begin{equation}
J(\boldsymbol\theta) = \sum_{i=1}^N \nu_i \, \| \mathbf{y}_i - f(\boldsymbol\theta) \|^2_{\boldsymbol\Sigma_i}.
\end{equation}
Expanding the quadratic term, we have:
\begin{equation}
\begin{aligned}
J(\boldsymbol\theta) &= \sum_{i=1}^N \nu_i \left( \mathbf{y}_i - f(\boldsymbol\theta) \right)^\top \boldsymbol\Sigma_i^{-1} \left( \mathbf{y}_i - f(\boldsymbol\theta) \right) \\
&= \sum_{i=1}^N \nu_i \left( \mathbf{y}_i^\top \boldsymbol\Sigma_i^{-1} \mathbf{y}_i - 2 f(\boldsymbol\theta)^\top \boldsymbol\Sigma_i^{-1} \mathbf{y}_i + f(\boldsymbol\theta)^\top \boldsymbol\Sigma_i^{-1} f(\boldsymbol\theta) \right).
\end{aligned}
\end{equation}
We now group the terms based on their dependence on $\boldsymbol\theta$:
\begin{equation}
\label{eq:expanded_J}
J(\boldsymbol\theta) = \underbrace{\sum_{i=1}^N \nu_i \mathbf{y}_i^\top \boldsymbol\Sigma_i^{-1} \mathbf{y}_i}_{\text{const.}}
- 2 f(\boldsymbol\theta)^\top \underbrace{\left( \sum_{i=1}^N \nu_i \boldsymbol\Sigma_i^{-1} \mathbf{y}_i \right)}_{\text{linear coef.}}
+ f(\boldsymbol\theta)^\top \underbrace{\left( \sum_{i=1}^N \nu_i \boldsymbol\Sigma_i^{-1} \right)}_{\text{quadratic coef.}} f(\boldsymbol\theta).
\end{equation}
The first term is constant with respect to $\boldsymbol\theta$ and can be absorbed into the normalization constant. We introduce the effective precision matrix $\boldsymbol\Sigma_\nu^{-1}$ and the precision-weighted mean $\bar{\mathbf{y}}_\nu$ as defined in the main text:
\begin{equation}
\boldsymbol\Sigma_\nu^{-1} = \sum_{i=1}^N \nu_i \boldsymbol\Sigma_i^{-1}, \qquad \boldsymbol\Sigma_\nu^{-1} \bar{\mathbf{y}}_\nu = \sum_{i=1}^N \nu_i \boldsymbol\Sigma_i^{-1} \mathbf{y}_i.
\end{equation}
Substituting these definitions into Eq.~\eqref{eq:expanded_J}:
\begin{equation}
J(\boldsymbol\theta) = \text{const.} - 2 f(\boldsymbol\theta)^\top \boldsymbol\Sigma_\nu^{-1} \bar{\mathbf{y}}_\nu + f(\boldsymbol\theta)^\top \boldsymbol\Sigma_\nu^{-1} f(\boldsymbol\theta).
\end{equation}
Now, consider the expansion of the squared Mahalanobis distance to the pooled mean $\bar{\mathbf{y}}_\nu$:
\begin{equation}
\begin{aligned}
\| \bar{\mathbf{y}}_\nu - f(\boldsymbol\theta) \|^2_{\boldsymbol\Sigma_\nu}
&= (\bar{\mathbf{y}}_\nu - f(\boldsymbol\theta))^\top \boldsymbol\Sigma_\nu^{-1} (\bar{\mathbf{y}}_\nu - f(\boldsymbol\theta)) \\
&= \bar{\mathbf{y}}_\nu^\top \boldsymbol\Sigma_\nu^{-1} \bar{\mathbf{y}}_\nu - 2 f(\boldsymbol\theta)^\top \boldsymbol\Sigma_\nu^{-1} \bar{\mathbf{y}}_\nu + f(\boldsymbol\theta)^\top \boldsymbol\Sigma_\nu^{-1} f(\boldsymbol\theta).
\end{aligned}
\end{equation}
Comparing the expanded forms, we observe that:
\begin{equation}
J(\boldsymbol\theta) = \| \bar{\mathbf{y}}_\nu - f(\boldsymbol\theta) \|^2_{\boldsymbol\Sigma_\nu} + C,
\end{equation}
where $C$ is a constant independent of $\boldsymbol\theta$. Therefore, the likelihood function satisfies:
\begin{equation}
\mathcal{L}(\boldsymbol\theta) \;\propto\; \exp\left( -\frac{1}{2} \| \bar{\mathbf{y}}_\nu - f(\boldsymbol\theta) \|^2_{\boldsymbol\Sigma_\nu} \right).
\end{equation}
This proves that the set of weighted observations is exactly equivalent to a single observation $\bar{\mathbf{y}}_\nu$ with noise covariance $\boldsymbol\Sigma_\nu$ for the purpose of parameter inference.

\section{Appendix: Derivation of the ESS formula}
\label{apd: ess}
We show that in the context of BED, with geometric pooled posterior as proposal of importance sampling, the effective sample size can be approximated as
\begin{equation}
  \mathrm{ESS}(\mathbf{y}_i)
  \approx
  J \exp\!\left(
    -\,(\mathbf{y}_i - \mathbf{y}_\nu)^\top R^{-1}\,\Sigma_{ff}\,R^{-1}(\mathbf{y}_i - \mathbf{y}_\nu)
  \right),
\end{equation}

Standard importance-sampling diagnostics relate the effective sample size (ESS) to the variability of the weights. A common starting point is the second-moment approximation \cite{sanz2020bayesian}:
\begin{equation}
\mathrm{ESS} \;\approx\; \frac{J}{\rho},
\label{eq:ess-rho}
\end{equation}
where $J$ is the number of importance samples and $\rho$ measures the $\chi^2$-divergence between between the target and proposal. Moreover, when both the target and proposal are Gaussian, $\rho$ admits an explicit closed form involving an exponential quadratic term, which is our starting point.

A further approximation is obtained under a log-normal model for the importance weights.
Specifically, \cite{neal2001annealed} shows that if the log-weights are (approximately) Gaussian, i.e., $\log\omega \approx \mathcal N(\mu,\sigma^2)$, then one may write
\begin{equation}
    \mathrm{ESS} \approx J \exp\left(-\mathrm{Var}(\log \omega)\right),
    \label{eq: ess step 1}
\end{equation}
where $\omega$ is the importance weights. In our setting, this equation is used as a general diagnostic: its accuracy depends on how well $\log\omega$ is approximated by a Gaussian random variable. We comment on this assumption at the end of this appendix.

In the context of importance sampling for BED, the target is the individual posterior $p(\boldsymbol\theta'|\mathbf{y}_i,\mathbf{d})$ associated with a certain outer sample $\mathbf{y}_i$, and the proposal is the geometric pooled posterior $p(\boldsymbol{\theta}'|Y,\mathbf{d})$. The importance weights $\omega(\boldsymbol\theta') \propto \frac{p(\boldsymbol\theta'|\mathbf{y}_i,\mathbf{d})}{p(\boldsymbol{\theta}'|Y,\mathbf{d})}$ are shown in \eqref{eq:snis} and its $\log$ value (with design $\mathbf{d}$ omitted for clarity):
\begin{equation}
    \begin{aligned}
        \log \omega(\boldsymbol\theta') &\;=\;\log  \frac{
        p(\mathbf{y}_i|\boldsymbol\theta' )
    }{
        \prod_{k=1}^{N} p(\mathbf{y}_k| \boldsymbol\theta')^{\nu_k}
    }+\mathrm{const}\\
    &\;=\; \log p(\mathbf{y}_i|\boldsymbol\theta' ) - \sum_{k=1}^N \nu_k \log p(\mathbf{y}_k| \boldsymbol\theta') +\mathrm{const}.
    \end{aligned}\label{eq: log weights}
\end{equation}

For a Gaussian observation model $p(\mathbf y_i | \boldsymbol\theta')=\mathcal N(\mathbf y_i; f(\boldsymbol\theta'),R)$, its $\log$ value is
\begin{equation}
    \begin{aligned}
        \log p(\mathbf y_i | \boldsymbol\theta') &= -\frac{1}{2} \left(\mathbf{y}_i - f(\boldsymbol{\theta}')\right)^\top R^{-1} \left(\mathbf{y}_i - f(\boldsymbol{\theta}')\right)\\
        &=\mathbf{y}_i^\top R^{-1} f(\boldsymbol{\theta}') - \frac{1}{2} f(\boldsymbol{\theta}')^\top R^{-1} f(\boldsymbol{\theta}') + \mathrm{const}(\mathbf{y_i}).
    \end{aligned}\label{eq: quandratic likelihood}
\end{equation}

Substituting Eq.~\eqref{eq: quandratic likelihood} into Eq.~\eqref{eq: log weights}:
\begin{equation}
    \begin{aligned}
        \log \omega(\boldsymbol\theta') &\;=\; \log p(\mathbf{y}_i|\boldsymbol\theta'_j ) - \sum_{k=1}^N \nu_k \log p(\mathbf{y}_k| \boldsymbol\theta'_j) +\mathrm{const}\\
        &\;=\; \left(\mathbf{y}_i-\sum_{k=1}^N \nu_k \mathbf{y}_k\right)^\top R^{-1} f(\boldsymbol{\theta}') - \frac{1}{2} \left(1-\sum_{k=1}^N \nu_k\right) f(\boldsymbol{\theta}') ^\top R^{-1} f(\boldsymbol{\theta}')  + \mathrm{const}
    \end{aligned}.\label{eq: log weight half}
\end{equation}

With
\[
1-\sum_{k=1}^N \nu_k = 0, \quad \sum_{k=1}^N \nu_k \mathbf{y}_k=\bar{\mathbf{y}}_\nu,
\]
Eq. \eqref{eq: log weight half} further simplifies to
\begin{equation}
    \log \omega(\boldsymbol\theta') \;=\; (\mathbf{y}_i - \bar{\mathbf{y}}_\nu)^\top R^{-1} f(\boldsymbol{\theta}')+\mathrm{const}.
\end{equation}

For the variance term $\mathrm{Var} \log (\omega)$ in Eq.~\eqref{eq: ess step 1}, the variance is taken with respect to $\boldsymbol\theta'$ from the proposal distribution, so the forward prediction $f(\boldsymbol{\theta}')$ is treated as a random vector. Notice $\log \omega(\boldsymbol\theta') $ is a linear function of $f(\boldsymbol{\theta}')$, thus its variance is written as
\begin{equation}
    \mathrm{Var} \log (\omega) = (\mathbf{y}_i - \bar{\mathbf{y}}_\nu)^\top R^{-1}\,\Sigma_{ff}\,R^{-1}(\mathbf{y}_i - \bar{\mathbf{y}}_\nu),
    \label{eq: var log weight}
\end{equation}
with $\Sigma_{ff}$ being the covariance matrix of $f(\boldsymbol{\theta}')$:
\[
\Sigma_{ff}=\frac{1}{J-1} \sum_{j=1}^J \left(f(\boldsymbol{\theta}_j')-\bar{f}\right)\left(f(\boldsymbol{\theta}_j')-\bar{f}\right)^\top, \quad \bar{f}=\frac{1}{J}\sum_{j=1}^J f(\boldsymbol{\theta}_j').
\]

Substituting Eq.~\eqref{eq: var log weight} into Eq.~\eqref{eq: ess step 1}, then the desired formula is obtained.

Here we discuss the exactness. It is important to distinguish the algebraic derivation from the statistical approximation in our result.

First, the derivation of the log-weight variance is exact. For a Gaussian observation model and normalized geometric pooling weights ($\sum_k \nu_k = 1$), the log-weight admits the exact representation
$\log\omega(\theta') = a^\top f(\theta') + \mathrm{const}$ with $a = R^{-1}(\mathbf y_i-\bar{\mathbf{y}}_\nu)$, and hence
$\mathrm{Var}(\log\omega)=a^\top \Sigma_{ff} a$ holds as an identity.

The only approximation in \eqref{eq: ess step 1} is the log-normal (Gaussian log-weight) assumption. This assumption is exact in linear--Gaussian settings, e.g., when $f(\boldsymbol\theta')$ is linear and the proposal distribution for $\theta'$ is Gaussian, in which case $a^\top f(\boldsymbol\theta')$ (and thus $\log\omega$) is Gaussian. It remains a useful heuristic as long as the $a^\top f(\boldsymbol\theta')$ is approximately Gaussian under the proposal, even if the full distribution of $f(\boldsymbol\theta')$ is non-Gaussian.

\section{Appendix: Proof of Covariance contraction}
\label{apd: lower covariance}

\begin{proposition}
Let $\theta \sim \mathcal{N}(m_0, P_0)$ and $Y = A\theta + \varepsilon$ with $\varepsilon \sim \mathcal{N}(0, \Sigma)$ and $\Sigma \succ 0$. Denote $f := A\theta$. 
Let $P_{ff} := \mathrm{Cov}(f) = A P_0 A^\top$ and $\Sigma_{ff} := \mathrm{Cov}(f \mid Y) = A P_{\mathrm{new}} A^\top$, where $P_{\mathrm{new}}$ is the posterior covariance of $\theta$ given $Y$. Then
\begin{equation}
\Sigma_{ff} \preceq P_{ff}.
\end{equation}
\end{proposition}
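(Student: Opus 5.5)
The plan is to write the Gaussian posterior covariance $P_{\mathrm{new}}$ in its standard Kalman (Schur-complement) form and show $P_{\mathrm{new}} \preceq P_0$. The claim then follows from congruence with $A$.

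\textbf{Step 1: closed form of the posterior covariance.} For the linear--Gaussian model $Y = A\theta + \varepsilon$, conditioning the jointly Gaussian pair $(\theta, Y)$ gives
\[
P_{\mathrm{new}} = P_0 - P_0 A^\top \bigl(A P_0 A^\top + \Sigma\bigr)^{-1} A P_0 .
\]
The joint covariance has blocks $P_0$, $P_0A^\top$, and $AP_0A^\top+\Sigma$. The inverse $(AP_0A^\top+\Sigma)^{-1}$ exists because $\Sigma \succ 0$ and $AP_0A^\top \succeq 0$, so $AP_0A^\top+\Sigma \succ 0$. This is exactly the $J\to\infty$ limit of the EKI update in Lemma~\ref{lem:eki_linear_gaussian}, with $K = P_0A^\top(P_{ff}+\Sigma)^{-1}$. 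Hence $P_{\mathrm{new}} = P_0 - K(P_{ff}+\Sigma)K^\top$.

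\textbf{Step 2: the subtracted term is positive semidefinite.} Set $G := (AP_0A^\top+\Sigma)^{-1} \succ 0$. For any vector $v$, put $w = AP_0 v$. Then
\[
v^\top P_0A^\top G A P_0 v = w^\top G w \ge 0,
\]
so $P_{\mathrm{new}} \preceq P_0$.

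\textbf{Step 3: congruence.} The Loewner order is preserved under $X \mapsto AXA^\top$: if $X \succeq 0$, then $u^\top AXA^\top u = (A^\top u)^\top X (A^\top u) \ge 0$. Applying this to $P_0 - P_{\mathrm{new}} \succeq 0$ gives $\Sigma_{ff} = AP_{\mathrm{new}}A^\top \preceq AP_0A^\top = P_{ff}$.

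It is also possible to write the difference explicitly as
\[
P_{ff} - \Sigma_{ff} = P_{ff}(P_{ff}+\Sigma)^{-1}P_{ff} \succeq 0,
\]
which is the same computation done directly in observation space.

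\textbf{Main obstacle.} The proof itself is routine. The only delicate point is justifying the covariance formula in Step 1 without assuming that $P_0$ is invertible. The information form $(P_0^{-1}+A^\top\Sigma^{-1}A)^{-1}$ requires $P_0 \succ 0$, so I would use the conditional-Gaussian (Schur complement) formula instead, which only needs $\Sigma \succ 0$. For this proposition, $P_{\mathrm{new}}$ is the deterministic posterior covariance, and the sample-based EKI quantities are not needed.
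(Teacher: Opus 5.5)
Your proposal is correct and follows essentially the same route as the paper. You write $P_{\mathrm{new}}$ in Kalman/Schur-complement form, note that $P_0 - P_{\mathrm{new}} = P_0A^\top(AP_0A^\top+\Sigma)^{-1}AP_0 \succeq 0$, and conclude by congruence with $A$. Your explicit identity $P_{ff}-\Sigma_{ff} = P_{ff}(P_{ff}+\Sigma)^{-1}P_{ff}$ and your remark that only $\Sigma \succ 0$ (not $P_0 \succ 0$) is needed are both correct refinements of the same argument.
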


\begin{proof}
For the linear--Gaussian model, the posterior covariance is
\[
P_{\mathrm{new}} = P_0 - P_0 A^\top (A P_0 A^\top + \Sigma)^{-1} A P_0.
\]
Hence,
\[
P_0 - P_{\mathrm{new}} = P_0 A^\top (A P_0 A^\top + \Sigma)^{-1} A P_0 \succeq 0,
\]
so $P_{\mathrm{new}} \preceq P_0$. Multiplying by $A$ and $A^\top$ preserves the Loewner order, yielding
\[
\Sigma_{ff} = A P_{\mathrm{new}} A^\top \preceq A P_0 A^\top = P_{ff}.
\]
\end{proof}


\bibliographystyle{siamplain}
\bibliography{references}
\end{document}